\newtheorem{theorem}{Theorem}
\newtheorem{definition}{Definition}
\newtheorem{subsec:coding}{subsec:coding}
\newtheorem{lemma}{Lemma}
\newtheorem{ppty}{Property}
\begin{document}

\title{The Feasibility of Scalable Video Streaming over Femtocell Networks}

\author{
\authorblockN{Donglin Hu \ \ and \ \ Shiwen Mao \\}
\authorblockA{Department of Electrical and Computer Engineering  \\
Auburn University, Auburn, AL, USA} 
}

\maketitle

\pagestyle{plain}\thispagestyle{plain}

%\ls{0.89}

\begin{abstract}
 In this paper, we consider femtocell CR networks, where femto base stations (FBS) are deployed to greatly improve network coverage and capacity. We investigate the problem of generic data multicast in femtocell networks. We reformulate the resulting MINLP problem into a simpler form, and derive upper and lower performance bounds. Then we consider three typical connection scenarios in the femtocell network, and develop optimal and near-optimal algorithms for the three scenarios. Second, we tackle the problem of streaming scalable videos in femtocell CR networks. A framework is developed to captures the key design issues and trade-offs with a stochastic programming problem formulation. In the case of a single FBS, we develop an optimum-achieving distributed algorithm, which is shown also optimal for the case of multiple non-interfering FBS's. In the case of interfering FBS's, we develop a greedy algorithm that can compute near-opitmal solutions, and prove a closed-form lower bound on its performance.
\end{abstract}

%-------------------------------------------------------------------
% Introduction
\section{Introduction}
Due to the use of open space as transmission medium, capacity of wireless networks are usually limited by interference.  When a mobile user moves away from the base station, a considerably larger transmit power is needed to overcome attenuation, while causing interference to other users and deteriorating network capacity.  To this end, femtocells provide an effective solution that brings network infrastructure closer to mobile users.  A femtocell is a small (e.g., residential) cellular network, with a {\em femto base station} (FBS) connected to the owner's broadband wireline network~\cite{Chandrasekhar08, Kim09, Guvenc10}. The FBS serves approved users when they are within the coverage.  Among the many benefits, femtocells are shown effective on improving network coverage and capacity~\cite{Chandrasekhar08}.  Due to reduced distance, transmit power can be greatly reduced, leading to prolonged battery life, improved signal-to-interference-plus-noise ratio (SINR), and better spatial reuse of spectrum. 

Femtocells have received significant interest from the wireless industry. Although highly promising, many important problems should be addressed to fully harvest their potential, 
such as interference mitigation, resource allocation, synchronization, and QoS provisioning~\cite{Chandrasekhar08, Kim09}.  It is also critical for the success of this technology to support important applications such as real-time video streaming in femtocell networks. 

In this paper, we first investigate the problem of data multicast in femtocell networks. It is not atypical that many users may request for the same content, as often observed in wireline networks. By allowing multiple users to share the same downlink multicast transmission, significant spectrum and power savings can be achieved. 

In particular, we adopt {\em superposition coding} (SC) and {\em successive interference cancellation} (SIC), two well-known PHY techniques, for data multicast in femtocell networks~\cite{Goldsmith06}.  With SC, a compound signal is transmitted, consisting of multiple signals (or, layers) from different senders or from the same sender. With SIC, a strong signal can be first decoded, by treating all other signals as noise. Then the decoder will reconstruct the signal from the decoded bits, and subtract the reconstructed signal from the compound signal. The next signal will be decoded from the residual, by treating the remaining signals as noise. And so forth. A special strength of the SC with SIC approach is that it enables simultaneous unicast transmissions (e.g., many-to-one or one-to-many). It has been shown that SC with SIC is more efficient than PHY techniques with orthogonal channels~\cite{Goldsmith06, Li09}. 

We adopt SC and SIC for the unique femtocell network environment, and investigate how to enable efficient data multicast from the femtocells to multiple users. We formulate a Mixed Integer Nonlinear Programming (MINLP) problem, which is NP-hard in general. The objective is to minimize the total BS power consumption. Then we reformulate the MINLP problem into a simpler form, and derive upper and lower performance bounds. We also derive a simple heuristic scheme that assigns users to the BS's with a greedy approach. Finally, we consider three typical connection scenarios in the femtocell network, and develop optimal and near-optimal algorithms for the three scenarios. The proposed algorithms have low computational complexity, and are shown to outperform the heuristic scheme with considerable gains. 

Then, we investigate the problem of video streaming in femtocell cognitive radio (CR) networks.  We consider a femtocell network consisting of a {\em macro base station} (MBS) and multiple FBS's. The femtocell network is co-located with a primary network with multiple licensed channels. This is a challenging problem due to the stringent QoS requirements of real-time videos and, on the other hand, the new dimensions of network dynamics (i.e., channel availability) and uncertainties (i.e., spectrum sensing and errors) found in CR networks.

We adopt Scalable Video Coding (SVC) in our system~\cite{Hu10JSAC,Hu10TW}. SVC encodes a video into multiple substreams, subsets of which can be decoded to provide different quality levels for the reconstructed video~\cite{Wien07}.  Such scalability is very useful for video streaming systems, especially in CR networks, to accommodate heterogeneous channel availabilities and dynamic network conditions. We consider H.264/SVC medium grain scalable (MGS) videos, 
since MGS can achieve better rate-distortion performance over 
Fine-Granularity-Scalability (FGS), although it only has Network Abstraction Layer (NAL) unit-based granularity~\cite{Wien07}.

The unique femtocell network architecture and the scalable video allow us to develop a framework that captures the key design issues and trade-offs, and to formulate a {\em stochastic programming} problem.  It has been shown that the deployment of femtocells has a significant impact on the network performance~\cite{Chandrasekhar08}.  In this paper, we examine three deployment scenarios.  In the case of a single FBS, we apply {\em dual decomposition} to develop a distributed algorithm that can compute the optimal solution.  In the case of multiple non-interfering FBS's, we show that the same distributed algorithm can be used to compute optimal solutions. In the case of multiple interfering FBS's, we develop a greedy algorithm that can compute near-optimal solutions, and prove a closed-form lower bound for its performance based on an {\em interference graph} model.  The proposed algorithms are evaluated with simulations, and are shown to outperform three alternative schemes with considerable gains. 

The remainder of this paper is organized as follows.  The related work is discussed in Section~\ref{sec:femto_work}.  We investigate the problem of data multicast over fenmtocell networks in Section~\ref{sec:femto_mcast_sic}. The problem of streaming multiple MGS videos in a femtocell CR network is discussed in Section~\ref{sec:femto_cr_video}. Section~\ref{sec:femto_conc} concludes this paper. 

%-------------------------------------------------------------------
% Background and Related Work
\section{Background and Related Work}\label{sec:femto_work}
Femtocells have attracted considerable interest from both industry and academia. Technical and business challenges, requirements and some preliminary solutions to femtocell networks are discussed in~\cite{Chandrasekhar08}. 
Since FBS's are distributedly located and are able to spatially reuse the same channel, considerable research efforts were made on interference analysis and mitigation~\cite{Chandrasekhar09, Lee10}. 
A distributed utility based SINR adaptation scheme was presented in~\cite{Chandrasekhar09} to alleviate cross-tire interference at the macrocell from co-channel femtocells. Lee, Oh and Lee~\cite{Lee10} proposed a fractional frequency reuse scheme to mitigate inter-femtocell interference. 

Deploying femtocells by underlaying the macrocell has been proved to significantly improve indoor coverage and system capacity. However, interference mitigation in a two-tier heterogeneous network is a challenging problem. In~\cite{Chu11}, the interference from macrocell and femtocells was mitigated by a spatial channel separation scheme with codeword-to-channel mapping. In~\cite{Rangan10}, the rate distribution in the macrocell was improved by subband partitioning and modest gains were achieved by interference cancellation. In~\cite{Bharucha09}, the interference was controlled by denying the access of femtocell base stations to protect the transmission of nearby macro base station. A novel algorithmic framework was presented in~\cite{Madan10} for dynamic interference management to deliver QoS, fairness and high system efficiency in LTE-A femtocell networks. Requiring no modification of existing macrocells, 
%cognitive radio technology is one of the best candidate~\cite{Cheng2011}. It 
CR was shown to achieve considerable performance improvement when applied to interference mitigation~\cite{Cheng11}. In~\cite{Kaimaletu11}, the orthogonal time-frequency blocks and transmission opportunities were allocated based on a safe/victim classification.

SIC has high potential of sending or receiving multiple signals concurrently, which improves the transmission efficiency~\cite{Hu11GC}. In~\cite{Li09}, the authors developed MAC and routing protocols that exploit SC and SIC to enable simultaneous unicast transmissions.  
Sen, et al. investigated the possible throughput gains with SIC from a MAC layer perspective~\cite{Sen10}. Power control for SIC was comprehensively investigated and widely applied to code division multiple access (CDMA) systems~\cite{Jean09, Park08, Benvenuto07, Agrawal05, Andrews03}.  Applying game theory, Jean and Jabbari proposed an uplink power control under SIC in direct sequence-CDMA networks~\cite{Jean09}. In~\cite{Park08}, the authors introduced an iterative two-stage SIC detection scheme for a multicode MIMO system and showed the proposed scheme significantly outperformed the equal power allocation scheme. A scheme on joint power control and receiver optimization of CDMA transceivers was presented in~\cite{Benvenuto07}. 
In~\cite{Agrawal05, Andrews03}, the impact of imperfect channel estimation and imperfect interference cancellation on the capacity of CDMA systems was examined.  

The problem of video over CR networks has only been studied in a few recent papers~\cite{Shiang08, Ding09, Hu12JSAC, Luo11}. In our prior work, we investigated the problem of scalable video over infrastructure-based CR networks~\cite{Hu10JSAC} and multi-hop CR networks~\cite{Hu10TW}. The preliminary results of video over femtocell CR networks were presented in~\cite{Hu11IDS}.

%-------------------------------------------------------------------
% Multicast in femtocell networks: SIC and SC
\section{Multicast in Femtocell Networks with Superposition Coding and Successive Interference Cancellation}\label{sec:femto_mcast_sic}
In this section, we formulate a Mixed Integer Nonlinear Programming (MINLP) problem of data multicast in femotcell networks, which is NP-hard in general. Then we reformulate the MINLP problem into a simpler form, and derive upper and lower performance bounds. We also derive a simple heuristic scheme that assigns users to the BS's with a greedy approach. Finally, we consider three typical connection scenarios in the femtocell network, and develop optimal and near-optimal algorithms for the three scenarios. The proposed algorithms have low computational complexity, and are shown to outperform the heuristic scheme with considerable gains.

%----------------------------------------------------------------
\subsection{System Model and Problem Statement \label{sec:mod3}}
%----------------------------------------------------------------

%----------------------------------------------------------------
\subsubsection{System Model}
%----------------------------------------------------------------

Consider a femtocell network with an MBS (indexed $0$) and $M$ FBS's (indexed from $1$ to $M$) deployed in the area. The $M$ FBS's are connected to the MBS and the Internet via broadband wireline connections. Furthermore, we assume a spectrum band that is divided into two parts, one is allocated to the MBS with bandwidth $B_0$ and the other is allocated to the $M$ FBS's. The bandwidth allocated to FBS $m$ is denoted by $B_m$. When there is no overlap between the coverages of two FBS's, they can spatially reuse the same spectrum.  Otherwise, the MBS allocates disjoint spectrum to the FBS's with overlapping coverages. We assumed the spectrum allocation is known a priori.  

There are $K$ mobile users in the femtocell network. Each user is equipped with one transceiver that can be tuned to one of the two available channels, i.e., connecting to a nearby FBS or to the MBS. The network is time slotted. We assume block-fading channels, where the channel condition is constant in each time slot~\cite{Goldsmith06}. We focus on a multicast scenario, where the MBS and FBS's multicast a data file to the $K$ users. The data file is divided into multiple packets with equal length and transmitted in sequence with the same modulation scheme. Once packet $l$ is successfully received and decoded at the user, it requests packet $(l+1)$ in the next time slot. 

%\begin{figure}
%\centering
%\includegraphics[width=2.8in]{femto_mcast_sic/Femto_sic.eps}
%\caption{The femtocell network model.}
%\label{fig:networkModel}
%\vspace{-0.15in}
%\end{figure}

We adopt SC and SIC to transmit these packets~\cite{Goldsmith06}, as illustrated in Fig.~\ref{fig:sic}. In each time slot $t$, the compound signal has $L$ {\em layers} (or, levels), denoted as $D_1(t)$, $\cdots$, $D_L(t)$. Each level $D_i(t)$, $i=1, \cdots, L$, is a packet requested by some of the users in time slot $t$. 
%, which are requested by users, with the lowest index and the highest index, respectively. 
A user that has successfully decoded $D_i(t)$, for all $i=1$, $\cdots$, $l-1$, is able to subtract these signals from the received compound signal and then decodes $D_l(t)$, while the signals from $D_{l+1}(t)$ to $D_L(t)$ are treated as noise. 
%by this user.

%----------------------------------------------------------------
\subsubsection{Problem Statement}
%----------------------------------------------------------------

For the SC and SIC scheme to work, the transmit powers for the levels should be carefully determined, such that there is a sufficiently high SNR for the levels to be decodable. 
%such that the layers have a sufficient SNR and are thus decodable. 
It is also important to control the transmit powers of the BS's to  reduce interference and leverage frequency reuse. The annual power bill is a large part of a mobile operator's costs~\cite{Ulf10}. Minimizing BS power consumption is important to reduce not only the operator's OPEX, but also the global CO$_2$ emission; an important step towards ``green'' communications.  

\begin{figure}[!t]
\centering
\includegraphics[width=4.5in]{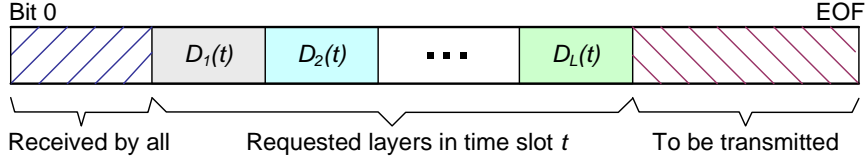}
\caption{Superposition coding and successive interference cancellation.}
\label{fig:sic}
%\vspace{-0.15in}
\end{figure}

Therefore, we focus on BS power allocation in this paper. The objective is to minimize the total power of all the BS's, while guaranteeing a target rate $R_{tar}$ for each user. Recall that the data file is partitioned into equal-length packets. The target rate $R_{tar}$ ensures that a packet can be transmitted within a time slot, for given modulation and channel coding schemes.

Define binary indicator $I_m^k$, for all $m$ and $k$, as:
\begin{equation} \label{eq:imk}
	I_m^k = \left\{ \begin{array}{ll}
      1, & \mbox{if user $k$ connects to BS $m$} \\
      0, & \mbox{otherwise.}
                  \end{array} \right.
\end{equation}
Consider a general time slot $t$ when $L$ data packets, or levels, are requested. We formulate the optimal power allocation problem (termed OPT-Power) as follows. 
\begin{eqnarray}
\mbox{minimize:} && \hspace{-0.2in} \sum_{m=0}^M \sum_{l=1}^L P_l^m \label{eq:ObjFun11} \\
\mbox{subject to:} && \hspace{-0.2in} B_m\log_2(1+\gamma_m^k I_m^k) \ge R_{tar}I_m^k, \mbox{ for all } k \label{eq:cntrate} \\
 && \hspace{-0.2in} \sum_{m=0}^M I_m^k=1, \mbox{for all } k \label{eq:cnttransceiver} \\
 && \hspace{-0.2in} P_l^m \ge 0, \mbox{ for all } l, m,
\end{eqnarray}
where $P_l^m$ is the power of BS $m$ for transmitting the level $l$ packet; 
%$I_m^k$ is a binary indicator. If user $k$ is connected to BS $m$, it is $1$. Otherwise it is $0$.
$\gamma_m^k$ is the SNR at user $k$ if it connects to BS $m$. Constraint (\ref{eq:cntrate}) guarantees the minimum rate at each user. Constraint (\ref{eq:cnttransceiver}) is due to the fact that each user is equipped with one transceiver, so it can only connect to one BS.  

Let $\mathcal{U}_l$ denote the set of users requesting the level $l$ packet. A user $k \in \mathcal{U}_l$ has decoded all the packets up to $D_{l-1}$. It subtracts the decoded signals from the received signal and treats signals $D_{l+1}, \cdots, D_L$ as noise. The SNR at user $k \in \mathcal{U}_l$, for $l=1, \cdots, L-1$, can be written as:
%For user $k$ in $\mathcal{U}_l$, $l=1, \cdots, L-1$, the SNR can be computed as:
\begin{eqnarray}\label{eq:SNR1}
%\gamma_m^k=\frac{H_m^kP_l^m}{N_0+H_m^k\sum_{i=l+1}^LP_i^m}
\gamma_m^k = H_m^k P_l^m / \left(N_0 + H_m^k \sum_{i=l+1}^L P_i^m \right),
\end{eqnarray}
where $H_m^k$ is the random channel gain from BS $m$ to user $k$ and $N_0$ is the noise power. For user $k \in \mathcal{U}_L$ that requests the last packet $D_L$, the SNR is
\begin{eqnarray}\label{eq:SNR2}
%\gamma_m^k=\frac{H_m^kP_L^m}{N_0}
\gamma_m^k = H_m^k P_L^m / N_0.
\end{eqnarray}

The optimization variables in Problem OPT-Power consist of the binary variables $I_m^k$'s and the continuous variables $P_l^m$'s. It is an MINLP
%Mixed Integer Nonlinear Programming (MINLP) 
problem, which is NP-hard in general. In Section~\ref{sec:alg}, we first reformulate the problem to a obtain a simpler form, and then develop effective algorithms for optimal and suboptimal solutions.  

%Since each user is equipped with one antenna and is able to connect to one of BS's, we have another constraint as follows:
%\begin{eqnarray}\label{eq:OneConnec}
%\mbox{$\sum_{m=0}^M$} I_m^k=1.
%\end{eqnarray}

%---------------------------------------------------------------
\subsection{Reformulation and Power Allocation \label{sec:alg}}
%---------------------------------------------------------------

In this section, we reformulate Problem OPT-Power to obtain a simpler form, and derive an upper bound and a lower bound for the total BS power. The reformulation also leads to a simple heuristic algorithm. Finally, we introduce power allocation algorithms for three connection scenarios. 

%---------------------------------------------------------------
\subsubsection{Problem Reformulation}
%---------------------------------------------------------------

Due to the monotonic logarithm functions and the binary indicators $I_m^k$, constraint~(\ref{eq:cntrate}) can be rewritten as:
%\begin{eqnarray}\label{eq:StSNR}
% && \mbox{$\gamma_m^k$} I_m^k \ge \Gamma_m^k I_m^k, \;\; m=0, 1, \cdots,M \\
% && \Gamma_m^k = 2^{R_{tar}/B_m} - 1 =: \Gamma_m, \;\; m=0, 1, \cdots, M,
%\end{eqnarray}
%
\begin{equation}\label{eq:StSNR}
  \gamma_m^k I_m^k \ge \Gamma_m^k I_m^k, \;\; m=0, 1, \cdots,M,
\end{equation}
where $\Gamma_m^k = \Gamma_m =: 2^{R_{tar}/B_m} - 1$ is the minimum SNR requirement at user $k$ that connects to BS $m$. 
%We assume all users have the same minimum SNR requirement $\Gamma_m$
To further simplify the problem, define $Q_l^m = \sum_{i=l}^L P_i^m$, with $Q_{L+1}^m=0$. Then power $P_l^m$ is the difference 
\begin{eqnarray}\label{eq:QrepP}
  P_l^m=Q_l^m-Q_{l+1}^m.
\end{eqnarray}
%We assume $Q_{L+1}^m$ is $0$. 
Problem OPT-Power can be reformulated as:
\begin{eqnarray}
\mbox{minimize} && \hspace{-0.2in} \sum_{m=0}^M Q_1^m \label{eq:ObjFun2} \\
%\mbox{subject to:} && \hspace{-0.2in} \frac{H_m^k(Q_l^m-Q_{l+1}^m)}{N_0+H_m^kQ_{l+1}^m}I_m^k\ge \Gamma_m I_m^k, \nonumber \\
\mbox{subject to:} && \hspace{-0.2in} H_m^k(Q_l^m \hspace{-0.025in} - \hspace{-0.025in} Q_{l+1}^m) / \hspace{-0.025in} \left( N_0 \hspace{-0.025in} + \hspace{-0.025in} H_m^k Q_{l+1}^m \right) I_m^k \ge \Gamma_m I_m^k, 
 \nonumber \\ 
 && \hspace{0.7in} 
 \mbox{ for all } k \in \mathcal{U}_l, l=1, \cdots, L \label{eq:cntrate2} \\
 && \hspace{-0.2in} Q_l^m \ge Q_{l+1}^m, l=1, \cdots, L \\
 && \hspace{-0.2in} \sum_{m=0}^M I_m^k=1, \mbox{ for all } k. 
\end{eqnarray}

For $l\le L$, constraint~(\ref{eq:cntrate2}) can be rewritten as:
\begin{eqnarray}\label{eq:QmlIneq}
 Q_l^m I_m^k \ge \left[ N_0 \Gamma_m / H_m^k + (1 + \Gamma_m) Q_{l+1}^m \right] I_m^k.
\end{eqnarray}
Let $\mathcal{U}_l^m$ be the subset of users connecting to BS $m$ in $\mathcal{U}_l$. Since $Q_l^m \ge Q_{l+1}^m$, (\ref{eq:QmlIneq}) can be rewritten as,
%\begin{eqnarray}\label{eq:QmlEqu}
%% Q_l^m &=& \max \left\{ Q_{l+1}^m, \max_{k \in \mathcal{U}_l^m} \left[ \frac{N_0 \Gamma_m}{H_m^k} + (1 + \Gamma_m) Q_{l+1}^m \right] \right\} \nonumber \\
%&& \hspace{-0.3in} Q_l^m = \max \left\{ Q_{l+1}^m, \max_{k \in \mathcal{U}_l^m} \left[ N_0 \Gamma_m / H_m^k + (1 + \Gamma_m) Q_{l+1}^m \right] \right\}. 
%%\nonumber \\
%%&& \hspace{-0.05in} = F_m(Q_{l+1}^m,\mathcal{U}_l^m),
%\end{eqnarray}
\begin{equation}\label{eq:QmlEqu}
  Q_l^m = \max \left\{ Q_{l+1}^m, \max_{k \in \mathcal{U}_l^m} \left[ N_0 \Gamma_m / H_m^k + (1 + \Gamma_m) Q_{l+1}^m \right] \right\}. 
%  Q_l^m = \max \left\{ Q_{l+1}^m, \max_{k \in \mathcal{U}_l^m} \left[ N_0 \frac{\Gamma_m}{H_m^k} + (1 + \Gamma_m) Q_{l+1}^m \right] \right\}. 
\end{equation}
%where $\mathcal{U}_l^m$ is the set of users connecting to BS $m$ in $\mathcal{U}_l$. 
From (\ref{eq:QmlEqu}), we define a function $Q_l^m = F_m(Q_{l+1}^m,\mathcal{U}_l^m)$ as:
\begin{eqnarray} \label{eq:FmDef}
%&& \hspace{-0.1in} 
F_m(Q_{l+1}^m,\mathcal{U}_l^m) 
%\label{eq:FmDef} \\
%&=&\hspace{-0.1in} 
= \left\{\begin{array}{l l}
 Q_{l+1}^m,& \mathcal{U}_l^m=\emptyset \\
 \max_{k\in\mathcal{U}_l^m} \left\{ \frac{N_0\Gamma_m}{H_m^k}+(1+\Gamma_m)Q_{l+1}^m \right\}, & \hspace{-0.05in} \mathcal{U}_l^m \neq \emptyset.
\end{array}\right.  %\nonumber
\end{eqnarray}
Obviously, $F_m(Q_{l+1}^m,\mathcal{U}_l^m)$ is non-decreasing with respect to $Q_{l+1}^m$.
%Then the objective function can be written as:
%\begin{eqnarray}\label{eq:ObjFun3}
%&&\sum_{m=0}^MQ_1^m \nonumber\\
%&=&\sum_{m=0}^MF_m(Q_2^m,\mathcal{U}_1^m) \nonumber\\
%&=&\sum_{m=0}^MF_m(F_m(Q_3^m,\mathcal{U}_2^m),\mathcal{U}_1^m) \nonumber\\
%&=&\sum_{m=0}^MF_m(\cdots (F_m(Q_{L+1}^m,\mathcal{U}_L^m),\mathcal{U}_{L-1}^m),\cdots,\mathcal{U}_1^m)\nonumber \\
%&=&\sum_{m=0}^MF_m(\cdots (F_m(0,\mathcal{U}_L^m),\mathcal{U}_{L-1}^m),\cdots,\mathcal{U}_1^m)
%\end{eqnarray}
%
%\begin{eqnarray}\label{eq:ObjFun3}
%&&\mbox{$\sum_{m=0}^M$} Q_1^m = \mbox{$\sum_{m=0}^M$} F_m(Q_2^m, \mathcal{U}_1^m) \nonumber \\
%&=&\mbox{$\sum_{m=0}^M$} F_m(\cdots (F_m(Q_{L+1}^m,\mathcal{U}_L^m),\mathcal{U}_{L-1}^m),\cdots,\mathcal{U}_1^m)\nonumber \\
%&=&\mbox{$\sum_{m=0}^M$} F_m(\cdots (F_m(0, \mathcal{U}_L^m), \mathcal{U}_{L-1}^m), \cdots, \mathcal{U}_1^m).
%\end{eqnarray}
It follows that
\begin{eqnarray}\label{eq:ObjFun3}
Q_1^m &=& F_m(Q_2^m, \mathcal{U}_1^m) \; = \; F_m(F_m(Q_3^m,\mathcal{U}_2^m),\mathcal{U}_1^m) \nonumber \\
&=& F_m(\cdots (F_m(Q_{L+1}^m, \mathcal{U}_L^m), \mathcal{U}_{L-1}^m), \cdots, \mathcal{U}_1^m) \nonumber \\
&=& F_m(\cdots (F_m(0, \mathcal{U}_L^m), \mathcal{U}_{L-1}^m), \cdots, \mathcal{U}_1^m).
\end{eqnarray}

If none of the subsets $\mathcal{U}_l^m$ ($l=1,\cdots,L$) is empty, we can expand the above recursive term using (\ref{eq:FmDef}). It follows that
%\begin{eqnarray}\label{eq:FoldTerm}
%Q_1^m 
%&=& F_m(\cdots (F_m(0,\mathcal{U}_L^m),\mathcal{U}_{L-1}^m),\cdots,\mathcal{U}_1^m) \nonumber \\
%&=& N_0 \Gamma_m \mbox{$\sum_{l=1}^L$} (1 + \Gamma_m)^{c_l^m} \max_{k \in \mathcal{U}_l^m} \left\{1 / H_m^k \right\},
%\end{eqnarray}
%
\begin{equation}\label{eq:FoldTerm}
Q_1^m = N_0 \Gamma_m \sum_{l=1}^L (1 + \Gamma_m)^{c_l^m} \max_{k \in \mathcal{U}_l^m} \left\{1 / H_m^k \right\},
\end{equation}
where the exponent $c_l^m$ is defined as $c_1^m=0$ and $c_{l+1}^m=c_l^m+1$.  Otherwise, if a subset $\mathcal{U}_l^m = \emptyset$ for some $m$, we have that $Q_l^m=Q_{l+1}^m$,
%To make the equation (\ref{eq:FoldTerm}) hold, we assume $\max_{k\in\mathcal{U}_l^m} \left\{1/H_m^k\right\}=0$ and $c_i^m=c_{i-1}^m$. 
%Due to the empty set, we have that 
$\max_{k\in\mathcal{U}_l^m} \left\{1/H_m^k\right\}=\max_{k\in \emptyset} \left\{1/H_m^k\right\}=0$, and $c_l^m=c_{l-1}^m$. Eq.~(\ref{eq:FoldTerm}) still holds true.  

Finally, the objective function~(\ref{eq:ObjFun2}) can be rewritten as
\begin{equation}
% \mbox{$\sum_{m=0}^M$} Q_1^m = \mbox{$\sum_{m=0}^M$} N_0 \Gamma_m \mbox{$\sum_{l=1}^L$} (1 + \Gamma_m)^{c_l^m} \max_{k \in \mathcal{U}_l^m} \left\{1 / H_m^k \right\}.
\sum_{m=0}^M N_0 \Gamma_m \sum_{l=1}^L (1 + \Gamma_m)^{c_l^m} \max_{k \in \mathcal{U}_l^m} \left\{1 / H_m^k \right\}.
\end{equation}
Since $(1+\Gamma_m)>0$, it can be seen that to minimize the total BS power, we need to keep the $c_l^m$'s as low as possible.

%---------------------------------------------------------------
\subsubsection{Performance Bounds \label{subsec:bounds}}
%---------------------------------------------------------------

The reformulation and simplification allow us to derive 
%upper and lower bounds 
performance bounds for the total BS power consumption.  First, we derive the upper bound for the objective function~(\ref{eq:ObjFun2}). Define a variable
\begin{equation}
%  \overline{G}_m = \max_{l\in\{1,\cdots,L\}} \left\{ \max_{k\in\mathcal{U}_l^m} \left\{\Gamma_m/H_m^k\right\} \right\}.
  \overline{G}_m = \max_{l\in\{1,\cdots,L\}} \max_{k\in\mathcal{U}_l^m} \left\{\Gamma_m/H_m^k\right\},
\end{equation}
which corresponds to the user with the worst channel condition among all users that connect to BS $m$. It follows that:
\begin{eqnarray}
 \sum_{m=0}^M Q_1^m &\hspace{-0.1in} =& \hspace{-0.1in} N_0 \sum_{m=0}^M \sum_{l=1}^L (1+\Gamma_m)^{c_l^m}\max_{k\in\mathcal{U}_l^m}\left\{\Gamma_m / H_m^k\right\} \nonumber \\
 &\hspace{-0.1in} \le& \hspace{-0.1in} N_0 \sum_{m=0}^M \sum_{l=1}^L (1+\Gamma_m)^{c_l^m} \overline{G}_m \nonumber \\
 &\hspace{-0.1in} \le& \hspace{-0.1in} N_0 \sum_{m=0}^M \overline{G}_m \sum_{l=1}^L (1+\Gamma_m)^{l-1} \nonumber \\
 &\hspace{-0.1in} =& \hspace{-0.1in} N_0 \sum_{m=0}^M \overline{G}_m \left[ (1+\Gamma_m)^L-1 \right] / \Gamma_m. \label{eq:UBound}
\end{eqnarray}

In~(\ref{eq:UBound}), the first inequality is from the definition of $\overline{G}_m$. The second inequality is from the definition of $c_{l+1}^m$. Specifically, $c_1^m=0$; when $\mathcal{U}_l^m \neq \emptyset$, we have $c_{l}^m=c_{l-1}^m+1$; when $\mathcal{U}_l^m = \emptyset$, we have $c_{l}^m=c_{l-1}^m$. It follows that $c_l^m\le l-1$. 
%because $c_{l+1}^m$ is equal to either $c_l^m+1$ or $c_l^m$. Recall that $c_1^m=0$. Thus $c_l^m\le l-1$.
Therefore, (\ref{eq:UBound}) is an upper bound on the objective function~(\ref{eq:ObjFun2}). 

Furthermore, by defining $\overline{G} = \max_{m\in\{0,\cdots,M\}} \left\{ \overline{G}_m \right\}$, and $\overline{\Gamma} = \max_{m\in\{0,\cdots,M\}} \left\{ \Gamma_m \right\}$, we can get a looser upper bound from~\ref{eq:UBound} as
\begin{equation}
 \sum_{m=0}^M Q_1^m \leq N_0 \overline{G} (M+1)\left[ (1+\overline{\Gamma})^L-1\right]/\overline{\Gamma}.
\end{equation}

Next, we derive a lower bound for (\ref{eq:ObjFun2}). Define
%the objective function. Define
\begin{equation}
 \left\{ \begin{array}{l}
  \underline{G}^l = \min_{m\in\{0,\cdots,M\}} \max_{k\in\mathcal{U}_l^m} \left\{ \Gamma_m/ H_m^k \right\} \\
  \underline{\Gamma} = \min_{m\in\{0,\cdots,M\}} \left\{ \Gamma_m \right\}. 
          \end{array} \right.
\end{equation}
We have that
\begin{eqnarray}
 \sum_{m=0}^M Q_1^m &\hspace{-0.1in} =& \hspace{-0.1in} N_0 \sum_{m=0}^M \sum_{l=1}^L (1+\Gamma_m)^{c_l^m} \max_{k\in\mathcal{U}_l^m} \left\{ \Gamma_m / H_m^k \right\} \nonumber \\
 &\hspace{-0.1in} \ge& \hspace{-0.1in} N_0 \sum_{m=0}^M \sum_{l=1}^L (1+\Gamma_m)^{c_l^m} \underline{G}^l \nonumber \\
 &\hspace{-0.1in} \ge& \hspace{-0.1in} N_0 \sum_{l=1}^L \underline{G}^l \sum_{m=0}^M (1+\underline{\Gamma})^{c_l^m} \nonumber \\
 &\hspace{-0.1in} \ge& \hspace{-0.1in} N_0 (M+1) \sum_{l=1}^L \underline{G}^l (1+\underline{\Gamma})^{\frac{\sum_{m=0}^Mc_l^m}{M+1}}\nonumber \\
 &\hspace{-0.1in} \ge& \hspace{-0.1in} N_0  (M+1) \sum_{l=1}^L \underline{G}^l(1+\underline{\Gamma})^{\frac{l-1}{M+1}}. \label{eq:LBound}
\end{eqnarray}

In~(\ref{eq:LBound}), the first inequality is from the definition of $\underline{G}^l$. The second inequality is due to the definition of $\underline{\Gamma}$. The third inequality is due to the fact that $(1+\Gamma)^{c_l^m}$ is a convex function. The fourth inequality is because that each level must be transmitted by at least one BS. Thus for each level $l$, there is at least one $c_l^m=c_{l-1}^m+1$ for some $m$.  It follows that the sum $\sum_{m=0}^M c_l^m$ should be greater than $l-1$. Therefore, (\ref{eq:LBound}) provides a lower bound for~(\ref{eq:ObjFun2}). 

Furthermore, by defining $\underline{G} = \min_{l\in\{1,\cdots,L\}} \left\{ \underline{G}^l \right\}$, we can obtain a looser lower bound from (\ref{eq:LBound}) as
\begin{equation}
 \sum_{m=0}^M Q_1^m \geq N_0 \underline{G} (M+1) \frac{(1+\underline{\Gamma})^{\frac{L}{M+1}}-1}{(1+\underline{\Gamma})^{\frac{1}{M+1}}-1}.
\end{equation}

%---------------------------------------------------------------
\subsubsection{A Simple Heuristic Scheme \label{subsec:heuristic}}
%---------------------------------------------------------------

We first describe a greedy heuristic algorithm that solves OPT-Power with suboptimal solutions.  With this heuristic, each user compares the channel gains from the MBS and the FBS's. It chooses the BS with the best channel condition to connect to, thus the values of the binary variables $I_m^k$ are determined.  Once the binary variables are fixed, all the subsets $\mathcal{U}_l^m$'s are determined. Starting with $Q_{L+1}^m=0$, we can apply (\ref{eq:QmlEqu}) iteratively to find the $Q_l^m$'s. Finally, the transmit powers $P_l^m$ can be computed using (\ref{eq:QrepP}). 

%{\bf Power? In heuristic algorithm, once the users choose the channel with the best channel condition, . We need to use the equation $Q_l^m=F_m(Q_{l+1}^m, \mathcal{U}_l^m)$. We can mention this equation is presented in later section}

With this approach, among the users requesting the level $l$ packet, it is more likely that some of them connect to the MBS and the rest connect to some FBS's, due to the random channel gains in each time slot. In this situation, both MBS and FBS will have to transmit all the requested data packets. Such situation is not optimal for minimizing the total power, as will be discussed in Section~\ref{subsubsec:caseII}.

%---------------------------------------------------------------
\subsubsection{Power Allocation Algorithms \label{subsec:proposed}}
%---------------------------------------------------------------

In the following, we develop three power allocation algorithms for three different connection scenarios with a more structured approach. 

%---------------------------------------------------------------
\paragraph{Case I--One Base Station}
%---------------------------------------------------------------

We first consider the simplest connection scenario where all the $K$ users connect to the same BS (i.e., either the MBS or an FBS). Assume all the users connect to BS $m$. Then we have $I_m^k=1$ for all $k$, and all the subsets $\mathcal{U}_l^m$ are non-empty; $I_{m'}^k=0$ for all $k$ and all $m' \neq m$, and all the subsets $\mathcal{U}_l^{m'}$ are empty for $m' \neq m$.

From (\ref{eq:FmDef}), we can derive the optimal solution as:
\begin{eqnarray}\label{eq:OptCase1}
Q_l^{m\ast} &=& (1+\Gamma_m) Q_{l+1}^{m\ast} + \max_{k \in \mathcal{U}_l^m} \left\{ N_0\Gamma_m / H_m^k \right\}, \nonumber \\
 &=& N_0 \Gamma_m \sum_{i=l}^L (1+\Gamma_m)^{i-l} \max_{k \in \mathcal{U}_l^m} \left\{1/H_m^k\right\}, 
 %\nonumber \\
 %&& \hspace{1.3in} 
 \;\; l=1,2, \cdots,L. 
%\\
%Q_{L+1}^{m\ast} &=& 0,
\end{eqnarray}
%where $Q_l^{m\ast}$ is the optimal solution of $Q_l^m$ and the optimal solution of $P_l^{m\ast}=Q_l^{m\ast}-Q_{l+1}^{m\ast}$ 
Recall that $Q_{L+1}^{m\ast} = Q_{L+1}^{m} = 0$, the optimal power allocation for Problem OPT-Power in this case is:
\begin{equation}
%	P_l^{i\ast}=Q_l^{i\ast}-Q_{l+1}^{i\ast}, i=m, \mbox{ for all } l.
  P_l^{m'\ast}=\left\{ \begin{array}{ll}
     Q_l^{m\ast}-Q_{l+1}^{m\ast}, & m'=m, \mbox{ for all } l \\
     0, & m' \neq m, \mbox{ for all } l.
                      \end{array} \right.
\end{equation}

%---------------------------------------------------------------
\paragraph{Case II--MBS and One FBS \label{subsubsec:caseII}}
%---------------------------------------------------------------

We next consider the case with one MBS and one FBS (i.e., $M=1$), where each user has two choices: connecting to either the FBS or the MBS. 

Recall that $\mathcal{U}_l^0$ and $\mathcal{U}_l^1$ are the subset of users who connected to the MBS and the FBS, respectively, and who request the level $l$ packet. Examining (\ref{eq:FoldTerm}), we find that the total power of BS $m$ can be significantly reduced if one or more levels are not transmitted, since the exponent $c_l^m$ will not be increased in this case. Furthermore, consider the two choices: (i) not transmitting level $l$, and (ii) not transmitting level $l'>l$ from BS $m$. The first choice will yield larger power savings, since more exponents (i.e., $c_l^m, c_{l+1}^m, \cdots, c_{l'-1}^m$) will assume smaller values. 
%The lower the level of packet that are not transmitted, the smaller the transmit power of the BS. 
Therefore, we should let these two subsets be empty whenever possible, i.e., either $\mathcal{U}_l^0=\emptyset$ or $\mathcal{U}_l^1=\emptyset$. According to this policy, all the users requesting the level $l$ packet will connect to the same BS. We only need to make the optimal connection decision for each subset of users requesting the same level of packet, rather than for each individual user. 

Since not transmitting a lower level packet yields more power savings for a BS,
%has a larger impact on the total power sum,
% (according to (\ref{eq:FoldTerm})), 
we calculate the power from the lowest to the highest level, and decide whether connecting to the MBS or the FBS for users in each level. Define $G_l^0 = \max_{k\in\mathcal{U}_l} \left\{ 1/H_0^k \right\}$ and $G_l^1 = \max_{k\in\mathcal{U}_l} \left\{ 1/H_1^k \right\}$. The algorithm for solving Problem OPT-Power in this case is given in Table~\ref{tab:Case2Algo}.  In Steps $2$--$10$, the decision 
%that the level $l$ packet is multicast via MBS or FBS 
on whether connecting to the MBS or the FBS is made by comparing the expected increments in the total power. The user subsets $\mathcal{U}_l^0$ and $\mathcal{U}_l^1$ are determined in Steps $4$ and $7$. In Steps $11$--$14$, $Q_l^m$'s and the corresponding $P_l^m$'s are computed in the reverse order, based on the determined subsets $\mathcal{U}_l^0$ and $\mathcal{U}_l^1$. 

The computational complexity of this algorithm is $\mathcal{O}(L)$.

\begin{table}[!t]
\begin{center}
\caption{Power Allocation Algorithm For Case II}
\begin{tabular}{ll}
\hline
1: & Initialize all $c_l^0$, $c_l^1$, $Q_{L+1}^0$ and $Q_{L+1}^1$ to zero; \\
2: & FOR $l=1$ TO $L$ \\
3: & $\;\;$ IF ($\Gamma_0(1+\Gamma_0)^{c_l^0}G_l^0 \le \Gamma_1(1+\Gamma_1)^{c_l^1}G_l^0$) \\
4: & $\;\;\;\;$ Set $\mathcal{U}_l^0=\mathcal{U}_l$ and $\mathcal{U}_l^1=\emptyset$; \\
5: & $\;\;\;\;$ $c_l^0=c_l^0+1$; \\
6: & $\;\;$ ELSE \\
7: & $\;\;\;\;$ Set $\mathcal{U}_l^0=\emptyset$ and $\mathcal{U}_l^1=\mathcal{U}_l$; \\
8: & $\;\;\;\;$ $c_l^1=c_l^1+1$; \\
9: & $\;\;$ END IF\\
10: & END FOR\\
11: & FOR $l=L$ TO $1$ \\
12: & $\;\;$ $Q_l^0=F_0(Q_{l+1}^0,\mathcal{U}_l^0)$ and $P_l^0=Q_l^0-Q_{l+1}^0$; \\
13: & $\;\;$ $Q_l^1=F_1(Q_{l+1}^1,\mathcal{U}_l^1)$ and $P_l^1=Q_l^1-Q_{l+1}^1$; \\
14: & END FOR\\
\hline
\end{tabular}
\label{tab:Case2Algo}
\end{center}
\vspace{-0.2in}
\end{table}

%---------------------------------------------------------------
\paragraph{Case III--MBS and Multiple FBS's \label{subsubsec:case3}}
%---------------------------------------------------------------

Finally, we consider the general case with one MBS and multiple FBS's in the network. Each user is able to connect to the MBS or a nearby FBS. Recall that we define $\mathcal{U}_l$ as the set of users requesting the level $l$ packet, and $\mathcal{U}_l^m$ as the subset of users in $\mathcal{U}_l$ that {\em connect} to BS $m$.  These sets have the following properties.
\begin{equation}\label{eq:SetProp1}
\left\{ \begin{array}{l}
  \bigcup_{m=0}^M \mathcal{U}_l^m = \mathcal{U}_l  \nonumber \\
  \mathcal{U}_l^m \bigcap \mathcal{U}_l^{m'} = \emptyset, \; \mbox{ for all } m' \neq m. \nonumber
        \end{array} \right.
\end{equation}
The first property is due to the fact that each user must connect to the MBS or an FBS. The second property is because each user can connect to only one BS. The user subsets connecting to different BS's do not overlap. Therefore, $\mathcal{U}_l^m$'s is a {\em partition} of $\mathcal{U}_l$ with respect to $m$.
 
In addition, we define $\mathcal{S}_l^m$ as the set of possible users that are {\em covered} by BS $m$ and request the level $l$ packet. These sets have the following properties. 
\begin{equation}\label{eq:SetProp2}
\left\{ \begin{array}{l}
\bigcup_{m=1}^M \mathcal{S}_l^m = \mathcal{S}_l^0=\mathcal{U}_l  \nonumber \\
\mathcal{S}_l^m \bigcap \mathcal{S}_l^0 = \mathcal{S}_l^m, \; \mbox{ for all } m \neq 0 \nonumber \\
\mathcal{S}_l^m \bigcap \mathcal{S}_l^{m'} = \emptyset, \; \mbox{ for all } m' \neq m \mbox{ and } m, m'\neq 0. \nonumber
        \end{array} \right. 
\end{equation}
The first property is because all users in each femtocell are covered by the MBS. The second property indicates that the users covered by FBS $m$ are a subset of the users covered by the MBS. The third property shows that the user subsets in different femtocells do not overlap. We can see that the $\mathcal{S}_l^m$'s, for $m=1,\cdots,M$, are also a partition of $\mathcal{U}_l$. 

Define $W_m(\mathcal{U})=\max_{k\in\mathcal{U}}\left\{ 1/H_m^k \right\}$, where $\mathcal{U}$ is the set of users and $m=0,\cdots,M$. If the set $\mathcal{U}$ is empty, we define $W_m(\emptyset)=0$. For example, consider Case II where $M=1$.  We have  $\mathcal{S}_l^0 = \mathcal{S}_l^1 = \mathcal{U}_l$, $W_0(\mathcal{U}_l)=G_l^0$, and $W_1(\mathcal{U}_l)=G_l^1$.

The power allocation algorithm for Case III is presented in Table~\ref{tab:Case3Algo}. The algorithm iteratively picks users from the {\em eligible} subset $\mathcal{S}_l^m$ and assigns them to the {\em allocated} subset $\mathcal{U}_l^m$.  
%In the table, $\Psi$ is the subset of FBS's that will transmit the level $l$ packet. The complementary set $\overline{\Psi}$ is the subset of FBS's that will not transmit the level $l$ packet. 
In each step $l$, $\Psi$ is the subset of FBS's that will transmit the level $l$ packet; the complementary set $\overline{\Psi}$ is the subset of FBS's that will not transmit the level $l$ packet. The expected increment in total power for each partition is computed, and the partition with the smallest expected increment will be chosen. $\Delta_l^m$ is the power of BS $m$ for transmitting the level $l$ data packet. In Steps $6$--$15$, the MBS and FBS combination $\Psi$ is determined for transmitting the level $l$ packet, with the lowest power $\Delta_0$.  In Steps $16$--$30$, elements in $\mathcal{S}_l^m$ are assigned to $\mathcal{U}_l^m$ according to $\Psi$. In Steps $31$--$35$, power sums $Q_l^m$ and the corresponding power allocations $P_l^m$ are calculated in the reverse order from the known $\mathcal{U}_l^m$'s. 

The complexity of the algorithm is $\mathcal{O}(ML)$. 

\begin{table}[!t]
\begin{center}
\caption{Power Allocation Algorithm For Case III} \begin{tabular}{ll} \hline
1: & Initialize: $c_l^m=0$ and $Q_{L+1}^m=0$, for all $l$, $m$; \\
2: & FOR $l=1$ TO $L$ \\
3: & $\;\;\;$ FOR $m=0$ TO $M$ \\
4: & $\;\;\;\;\;\;$
$\Delta_l^m=\Gamma_m(1+\Gamma_m)^{c_l^m}W_m(\mathcal{S}_l^m)$; \\
5: & $\;\;\;$ END FOR\\
6: & $\;\;\;$ Set $\Omega=\{1,\cdots,M\}$ and $\Psi=\emptyset$; \\
7: & $\;\;\;$ WHILE ($\Omega \neq \emptyset$) \\
8: & $\;\;\;\;\;\;$ $m'=\arg\min_{m\in\Omega} \Delta_l^m$; \\
9: & $\;\;\;\;\;\;$ Compute
$\Delta'=\Gamma_0(1+\Gamma_0)^{c_l^0}W_0(\bigcup_{m\in\overline{\Psi\cup m'}}\mathcal{S}_l^m)$;
\\
10:& $\;\;\;\;\;\;$ IF ($(\sum_{m\in\Psi\cup m'}\Delta_l^m + \Delta') < \Delta_0$) \\
11:& $\;\;\;\;\;\;\;\;\;$ Add $m'$ to $\Psi$; \\
12:& $\;\;\;\;\;\;\;\;\;$ $\Delta_0=\sum_{m\in\Psi}\Delta_l^m + \Delta'$; \\
13:& $\;\;\;\;\;\;$ END IF \\
14:& $\;\;\;\;\;\;$ Remove $m'$ from $\Omega$; \\
15:& $\;\;\;$ END WHILE \\
16:& $\;\;\;$ IF ($\Psi = \emptyset$) \\
17:& $\;\;\;\;\;\;$ $\mathcal{U}_l^0=\mathcal{S}_l^0$; \\
18:& $\;\;\;\;\;\;$ $c_l^0=c_l^0+1$; \\
19:& $\;\;\;\;\;\;$ Set $\mathcal{U}_l^m = \emptyset$, for all $m \neq 0$; \\
20:& $\;\;\;$ ELSE  \\
21:& $\;\;\;\;\;\;$ $\mathcal{U}_l^0=\bigcup_{m\in\overline{\Psi}}\mathcal{S}_l^m$; \\
22:& $\;\;\;\;\;\;$ IF ($|\Psi|<M$) \\
23:& $\;\;\;\;\;\;\;\;\;$ $c_l^0=c_l^0+1$; \\
24:& $\;\;\;\;\;\;$ END IF\\
25:& $\;\;\;\;\;\;$ FOR $m \in \Psi$ \\
26:& $\;\;\;\;\;\;\;\;\;$ $c_l^m=c_l^m+1$; \\
27:& $\;\;\;\;\;\;\;\;\;$ $\mathcal{U}_l^m=\mathcal{S}_l^m$; \\
28:& $\;\;\;\;\;\;$ END FOR\\
29:& $\;\;\;$ END IF \\
30:& END FOR \\
31:& FOR $l=L$ TO $1$ \\
32:& $\;\;\;$ FOR $m=0$ TO $M$ \\
33:& $\;\;\;\;\;\;$ $Q_l^m=F_m(Q_{l+1}^m,\mathcal{U}_l^m)$ and $P_l^m=Q_l^m-Q_{l+1}^m$; \\
34:& $\;\;\;$ END FOR \\
35:& END FOR\\
\hline
\end{tabular}
\label{tab:Case3Algo}
\end{center}
\vspace{-0.15in}
\end{table}

%---------------------------------------------------------------
\subsection{Performance Evaluation \label{sec:sim3}}
%---------------------------------------------------------------

We evaluate the performance of the proposed power allocation algorithms using MATLAB\textsuperscript{\scriptsize TM}. Three scenarios corresponding to the three cases in Section~\ref{sec:alg} are simulated: (i) Case I: a single MBS; (ii) Case II: one MBS and one FBS; and (iii) Case III: one MBS and three FBS's.

Since we do not find any similar schemes in the literature, we made the following comparisons. First, we compare 
%the total power consumption in scenario I and II. 
Cases I and II with respect to BS power consumption and interference footprint.  In both cases, there are $K=8$ users and $L=4$ levels. In Case I, the MBS bandwidth is $B_0=2$ MHz. In Case II, the MBS and the FBS share the $2$ MHz total bandwidth; the MBS bandwidth is $B_0=1$ MHz and the FBS bandwidth is $B_1=1$ MHz. 
%total bandwidth of the MBS and the FBS is $2$ MHz. Both MBS and FBS are allocated with $1$ MHz. 
The target data rate $R_{tar}$ is set to $2$ Mbps. The channel gain from a base station to each user is exponentially distributed in each time slot. 

%---------------------------------------------------------------
%	\begin{figure*}	
%	\centering
%	\subfigure[Case I vs. Case II: interference footprints]
%	{\includegraphics[width=4.0in, height=3.0in]{femto_mcast_sic/}
%	\label{fig:Case1VSCase2}}
%	\hfil
%	\subfigure[Case III: impact of number of levels $L$] 
%	{\includegraphics[width=4.0in, height=3.0in]{femto_mcast_sic/}
%	\label{fig:Case3Level}}
%	\hfil
%	\subfigure[Case III: impact of MBS bandwidth $B_0$] 
%	{\includegraphics[width=4.0in, height=3.0in]{femto_mcast_sic/}
%	\label{fig:Case3BW0}}
%	\caption{Simulation results of the power allocation algorithms for SIC-based multicast.}
%	\label{fig:sim}
%	%\vspace{-0.2in}
%	\end{figure*}
%---------------------------------------------------------------

The interference footprints in the three dimensional space are plotted in Fig.~\ref{fig:Case1VSCase2}. 
% for the two cases. 
The height $B$ of the cylinders indicates the spectrum used by a BS, 
% to the bandwidth of MBS and FBS 
while the radius $r$ is proportional to the BS transmit power. 
%The radius of MBS in the left figure is about $55$. However, with the help of FBS, MBS is able to reduce its power to $20$ without degrading the data rate requirement. The total power of MBS and FBS in the right figure is about $40$, which is less than that of MBS in the left figure.
In Case I when only the MBS is used, the total BS power is $45.71$ dBm and the volume of the cylinder is $\pi r^2 B = 18,841$ MHz m$^2$.  In Case II when both the MBS and FBS are used, the total BS power is $34.58$ dBm and the total volume of the two cylinders is $2,378$ MHz m$^2$. Using an additional FBS achieves a $11.13$ dB power saving and the interference footprint is reduced to $12.62$\% of that in Case I. This simple comparison clearly demonstrate the advantages of femtocells achieved by bringing BS's closer to users.

\begin{figure} [!t]
\centering
\includegraphics[width=4.5in, height=3.0in]{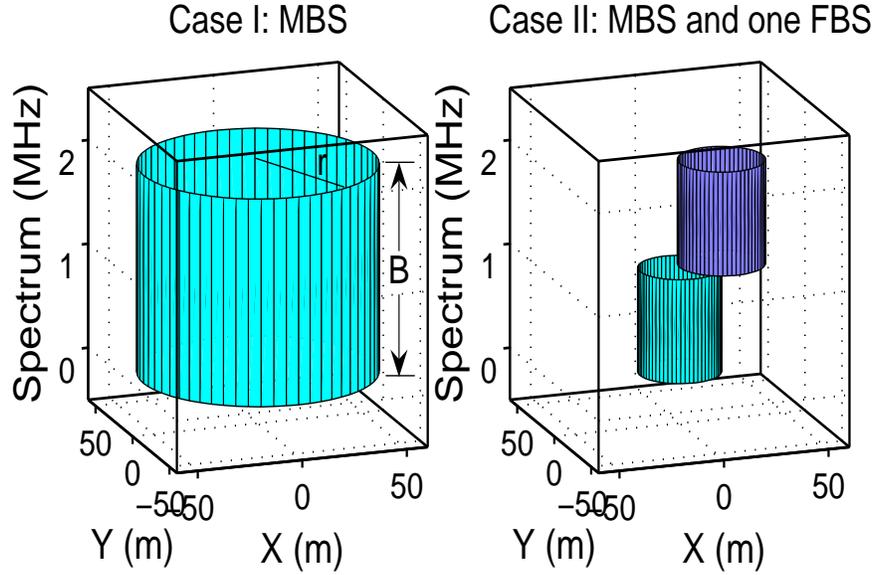}
\caption{Case I vs. Case II: interference footprints.}
\label{fig:Case1VSCase2}
\end{figure}

We next consider the more general Case III, using a femtocell network of one MBS and three FBS's. The MBS bandwidth is $B_0=1$ MHz and each FBS is assigned with bandwidth $B_m=1$ MHz, $m=1, 2, 3$. The target data rate is still $2$ Mbps. In Figs.~\ref{fig:Case3Level} and~\ref{fig:Case3BW0}, we plot four curves, each obtained with: (i) the heuristic scheme described in Section~\ref{subsec:heuristic}; (ii) The proposed algorithm presented in Section~\ref{subsubsec:case3}; (iii) The upper bound; and (iv) the lower bound derived in Section~\ref{subsec:bounds}. Each point in the figures is the average of $10$ simulation runs. The $95\%$ confidence intervals are plotted as error bars, which are all negligible. 

In Fig. \ref{fig:Case3Level}, we examine the impact of the number of packet levels $L$ on the total BS transmit power. We increase $L$ from $2$ to $6$, and plot the total power of base stations. As expected, the more packet levels, the larger the BS power consumption. 
%The curve of the proposed algorithm has lower slop than that of the heuristic scheme, implying that the proposed scheme is more efficient in power saving as more packet levels are transmitted.
Both the proposed and heuristic curves lie in between the upper and lower bound curves.  When $L$ is increased from $2$ to $6$, the power consumption of the heuristic scheme is increased by $12.22$ dB, while the power consumption of the proposed algorithm is increased by $9.94$ dB. The power savings achieved by the proposed algorithm over the heuristic scheme range from $3.92$ dB to $6.45$ dB. 

\begin{figure}[!t]
\centering
\includegraphics[width=4.5in, height=3.0in]{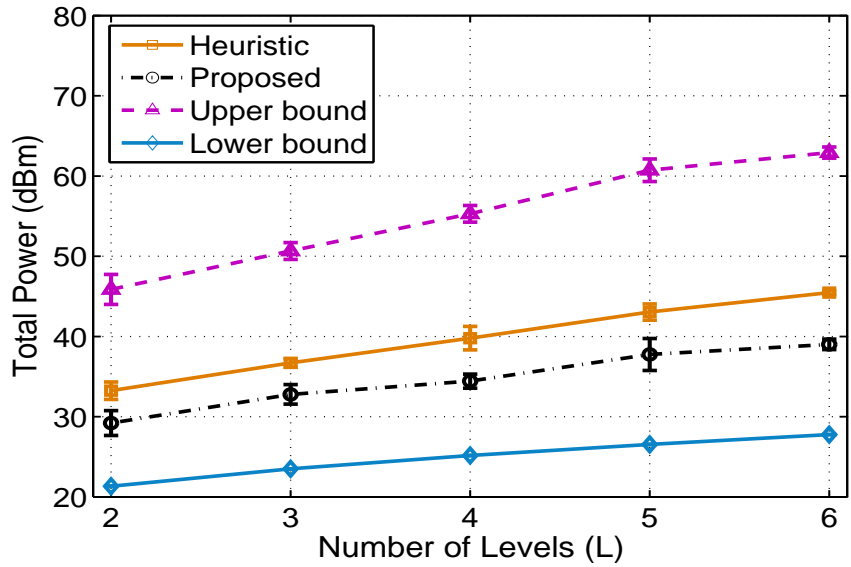}
\caption{Case III: impact of number of levels $L$.}
\label{fig:Case3Level}
\end{figure}

In Fig. \ref{fig:Case3BW0}, we show the impact of the BS bandwidths. The number of levels is $L=4$. We fix the total bandwidth at $2$ MHz, which is shared by the MBS and FBS's. We increase the MBS bandwidth from $0.4$ MHz to $1.6$ MHz in steps of $0.2$ MHz, while decrease the bandwidth of FBS's from $1.6$ MHz to $0.4$ MHz. We find that the total power consumption is increased as $B_0$ gets large. This is due to the fact that as the FBS bandwidth gets smaller, the FBS's have to spend more power to meet the minimum data rate requirement. The curve produced by the proposed algorithm has a smaller slop than that of the heuristic scheme: the overall increase in the total power of the proposed algorithm is $4.86$ dB, while that of the heuristic scheme is $20.84$ dB. This implies that the proposed scheme is not very sensitive to the bandwidth allocation between the MBS and FBS's.  The proposed algorithm also achieves consider power savings over the heuristic scheme. When $B_0=1.6$ MHz, the total power of the proposed algorithm is $20.75$ dB lower than that of the heuristic scheme. 

\begin{figure}[!t]
\centering
\includegraphics[width=4.5in, height=3.0in]{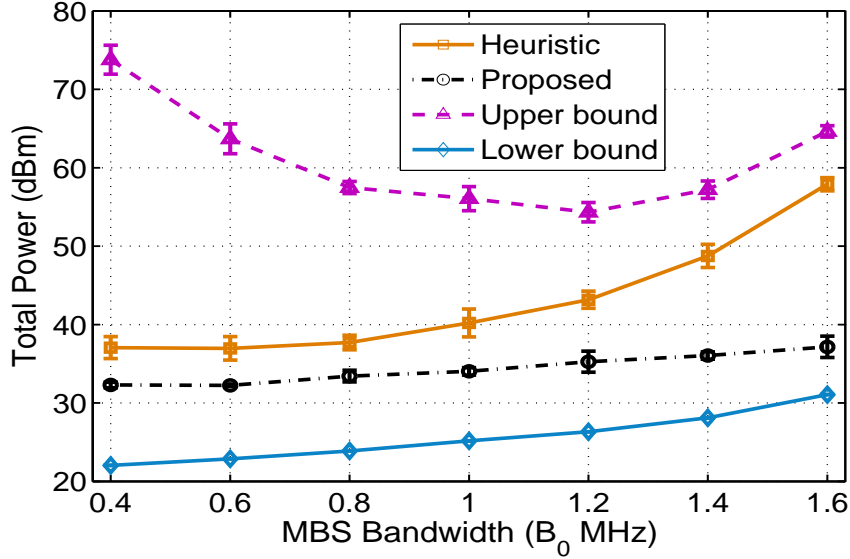}
\caption{Case III: impact of MBS bandwidth $B_0$.}
\label{fig:Case3BW0}
\end{figure}

%\begin{table}
%\caption{Notation}
%\centering\begin{tabular}{l l}
%\hline
%{\em Symbol} & {\em Definition} \\
%\hline
%$P_l^m$ & Power of the $m$-th BS transmitting the $l$-th layer packet\\
%$B_m$ & Bandwidth of the channel allocated to the $m$-th BS\\
%$\gamma_m^k$ & SNR at user $k$ connecting to the $m$-th BS\\
%$I_m^k$ & Indicator of the $k$-th users connecting to the $m$-th BS\\
%$R_{tar}$ & The target data rate \\
%$\Gamma_m^k$ & Min. SNR requirement at user $k$ connecting to the $m$-th BS\\
%$\mathcal{U}_l$ & Set of users requesting the $l$-th level data \\
%$\mathcal{U}_l^m$ & Set of users connecting to the $m$-th BS in $\mathcal{U}_l$ \\
%$H_m^k$ & Channel gain from the $m$-th BS to the $k$-th user \\
%$Q_l^m$ & Partial sum of $P_i^m$'s \\
%$M$ & Number of the FBS \\
%$K$ & Number of the users \\
%$L$ & Number of the packet layers\\
%\hline
%\end{tabular}
%\label{tb:Notation}
%\end{table}

%-------------------------------------------------------------------
% Video over CR Femtocells
\section{Video over CR Femtocell Networks}\label{sec:femto_cr_video}
In this section, we investigate the problem of video streaming in femtocell cognitive radio (CR) networks and formulate a {\em stochastic programming} problem to examine three deployment scenarios.  In the case of a single FBS, we apply {\em dual decomposition} to develop an optimum-achieving distributed algorithm, which is shown also optimal for the case of multiple non-interfering FBS's. In the case of multiple interfering FBS's, we develop a greedy algorithm that can compute near-optimal solutions, and prove a closed-form lower bound for its performance based on an {\em interference graph} model. The proposed algorithms are evaluated with simulations, and are shown to outperform three alternative schemes with considerable gains. 

%%%%%%%%%%%%%%%%%%%%%%%%%%%%%%%%%%%%%%%%%%%%%%%%%%%%%%%%%%%%%%%%%%%%%%
\subsection{System Model and Preliminaries \label{sec:mod4}}

%%%%%%%%%%%%%%%%%%%%%%%%%%%%%%%%%%%%%%%%%%%%%%%%%%%%%%%%%%%%%%%%%
\subsubsection{Spectrum and Network Model}

We consider a spectrum consisting of $(M+1)$ channels,  including one common, unlicensed channel (indexed as channel $0$) and $M$ licensed channels (indexed as channels $1$ to $M$).  The $M$ licensed channels are allocated to a primary network, and the common channel is exclusively used by all CR users. We assume all the channels follow a synchronized time slot structure~\cite{Zhao07a}. The capacity of each licensed channel is $B_1$ Mbps, while the capacity of the common channel is $B_0$ Mbps. The channel states evolve independently, while the occupancy of each licensed channel follows a two-state discrete-time Markov process.

The femtocell CR network is illustrated in Fig.~\ref{fig:netmod2}. There is an MBS and $N$ FBS's deployed in the area to serve CR users.  The $N$ FBS's are connected to the MBS (and the Internet) via broadband wireline connections. Due to advances in antenna technology,
it is possible to equip multiple antennas at the base stations.  The MBS has one antenna that is always tuned to the common channel.  Each FBS is equipped with multiple antennas (e.g., $M$) and is able to sense multiple licensed channels at the beginning of each time slot.  There are $K_i$ CR users in femtocell $i$, $i=1,2,\cdots,N$, and $\sum_{i=1}^N K_i = K$.  Each CR user has a software radio transceiver, which can be tuned to any of the $M$+1 channels.  A CR user will either connect to a nearby FBS using one or more of the licensed channels or to the MBS via the common channel. 

\begin{figure}[!t]
\centering
\includegraphics[width=4.0in, height=2.7in]{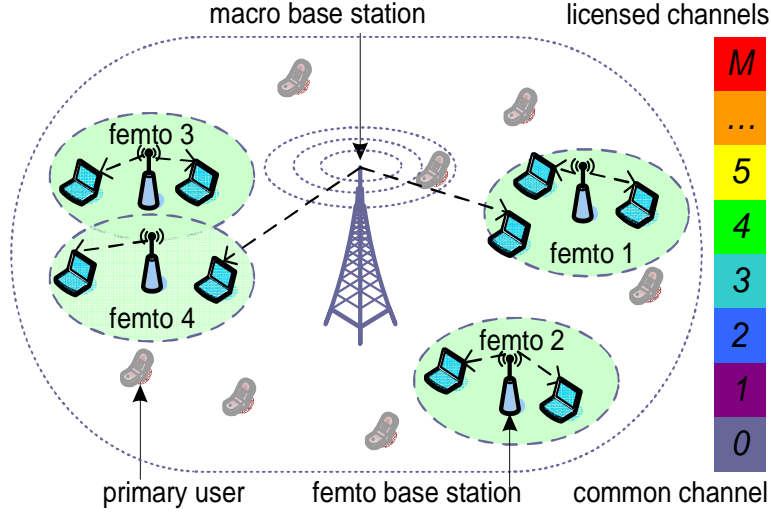}
\caption{A femtocell CR network with one MBS and four FBS's.}
\label{fig:netmod2}
\vspace{-0.1in}
\end{figure}

Although the CR users are mobile, we assume constant topology during a time slot. If the topology is changed during a time slot, the video transmission will only be interrupted for the time slot, since the proposed algorithms are executed in every time slot for new channel assignment and schedule.

%%%%%%%%%%%%%%%%%%%%%%%%%%%%%%%%%%%%%%%%%%%%%%%%%%%%%%%%%%%%%%%%
\subsubsection{Spectrum Sensing and Access}

The femtocell CR network is within the coverage of the infrastructure-based primary network.  Both FBS's and CR users sense the channels to identify spectrum opportunities in each time slot.  Each time slot consists of (i) a {\em sensing phase}, when CR users and FBS's sense licensed channels, (ii) a {\em transmission phase}, when CR users and FBS's attempt to access licensed channels, 
and (iii) an {\em acknowledgment phase}, when acknowledgments (ACK) are returned to the source. 

Cooperative sensing policy is also adopted here. We also adopt a {\em hypothesis test} to detect channel availability. 
We assume that each CR user chooses one channel to sense in each time slot, since it only has one transceiver.  The sensing results will be shared among CR users and FBS's via the common channel in the sensing phase. Given $L$ sensing results on channel $m$, the availability of channel $m$, i.e., $P^A_m(\Theta_1^m,\cdots,\Theta_L^m)$, can be computed iteratively as follows.
\begin{eqnarray}\label{eq:Iteration2}
P^A_m(\Theta_1^m)&=&\left[1+\frac{\eta_m}{1-\eta_m}\times\frac{(\delta_1^m)^{1-\Theta_1^m}(1-\delta_1^m)^{\Theta_1^m}}{(\epsilon_1^m)^{\Theta_1^m}(1-\epsilon_1^m)^{1-\Theta_1^m}}\right]^{-1} \\ 
%\nonumber \\
P^A_m(\vec{\Theta}_l^m)&=&P^A_m(\Theta_1^m,\Theta_2^m,\cdots,\Theta_l^m) \nonumber \\
&=&\left\{1+\left[\frac{1}{P^A_m(\Theta_1^m,\Theta_2^m,\cdots,\Theta_{l-1}^m)}-1\right]\times 
\right.\nonumber \\
&&\left.
\frac{(\delta_l^m)^{1-\Theta_l^m}(1-\delta_l^m)^{\Theta_l^m}}{(\epsilon_l^m)^{\Theta_l^m}(1-\epsilon_l^m)^{1-\Theta_l^m}}\right\}^{-1},l=2,\cdots,L. %\nonumber
\end{eqnarray}

We adopt a probabilistic approach: based on sensing results $\vec{\Theta}_m$, we have $D_m(t)=0$ with probability $P^D_m(\vec{\Theta}_m)$ and $D_m(t)=1$ with probability $1-P^D_m(\vec{\Theta}_m)$. For primary user protection, the collision probability with primary users caused by CR users should be bounded. The probability $P^D_m(\vec{\Theta}_m)$ is determined as follows
\begin{equation}\label{eq:PrDm2}
P^D_m(\vec{\Theta}_m)=\min \left\{ \gamma_m / \left[ 1 - P^A_m(\vec{\Theta}_m) \right], 1 \right\}.
\end{equation}

Let $\mathcal{A}(t) := \{m|D_m(t)=0\}$ be the set of available channels in time slot $t$. Then $G^t=\sum_{m\in \mathcal{A}(t)} P^A_m(\Theta_1^m)$ is the expected number of available channels.  These channels will be accessed in the transmission phase of time slot $t$.

%%%%%%%%%%%%%%%%%%%%%%%%%%%%%%%%%%%%%%%%%%%%%%%%%%%%%%%%%%%%%%%%
\subsubsection{Channel Model}

Without loss of generality, we consider independent block fading channels that is widely used in prior work~\cite{Rappaport01}. The channel fading-gain process is piecewise constant on blocks of one time slot, and fading in different time slots are independent.  Let $f^{i,j}_X(x)$ denote the {\em probability density function} of the received SINR $X$ from a base station $i$ 
at CR user $j$.  We assume the packet can be successfully decoded if the received SINR exceeds a threshold $H$.  The packet loss probability from base station $i$ to CR user $j$ is 
\begin{equation}\label{eq:PrFad}
P_{i,j}=\Pr\{X \le H\} = \int_0^{H}f_X^{i,j}(x) dx=F_X^{i,j}(H),
\end{equation}
where $F_X^{i,j}(H)$ is the cumulative density function of $X$.

In the case of correlated fading channels, which can be modeled as finite state Markov Process~\cite{Zhang99}, the packet loss probability in the next time slot can be estimated from the known state of the previous time slot and the transition probabilities. 
If the packet is successfully decoded, the CR user returns an ACK to the base station in the ACK phase.  We assume ACKs are always successfully delivered.

%%%%%%%%%%%%%%%%%%%%%%%%%%%%%%%%%%%%%%%%%%%%%%%%%%%%%%%%%%%%%%%%
\subsubsection{Video Performance Measure}

We assume each active CR user receives a real-time video stream from either the MSB or an FSB.  Without loss of generality, we adopt the 
MGS option of H.264/SVC, 
for scalability to accommodate the high variability of network bandwidth in CR networks.  

Due to real-time constraint, each Group of Pictures (GOP) of a video stream must be delivered in the next $T$ time slots. With MGS, enhancement layer NAL units can be discarded from a quality scalable bit stream, and thus packet-based quality scalable coding is provided. Our approach is to encode the video according to the maximum rate the channels can support. During transmission, 
only part of the MGS video gets transmitted as allowed by the current available channel bandwidth. The video packets are transmitted in decreasing order of their significance in decoding. 
When a truncated MGS video is received and decoded, the PSNR is computed by substituting the effective rate of the received MGS video into (\ref{eq:QuaMod}) given below, thus the original video is not required.

Without loss of generality, we assume that the last wireless hop is the bottleneck; video data is available at the MBS and FBS's when they are scheduled to be transmitted.  The quality of reconstructed MGS video can be modeled as~\cite{Wien07}:
\begin{equation}\label{eq:QuaMod}
  W(R)=\alpha+\beta \times R, 
\end{equation}
where $W(R)$ is the average peak signal-to-noise ratio (PSNR) of the reconstructed video, $R$ is the received data rate, $\alpha$ and $\beta$ are constants depending on the video sequence and codec. 

We verified (\ref{eq:QuaMod}) using an H.264/SVC codec and the {\em Bus}, {\em Mobile}, and {\em Harbour} test sequences.  In Fig.~\ref{fig:mgs-rd}, the markers are obtained by truncating the encoded video's enhancement layer at different positions to obtain different effective rates, while the curves are computed using (\ref{eq:QuaMod}). The curves fit well with measurements for the three sequences. It is worth noting that PSNR may not be a good measure of video quality as compared with alternative metrics such as MS-SSIM~\cite{Wang04}. The main reason for choosing PSNR is that there is a closed-form model relating it to network level metrics--video rate. With the closed-form model, we can have a mathematical formulation of the scheduling/resource allocation problem, and derive effective algorithms. Should such closed-form models be available for MS-SSIM, it is possible to incorporate it into the optimization framework as well. 

\begin{figure}[!t]
\centering
\includegraphics[width=4.5in, height=3.0in]{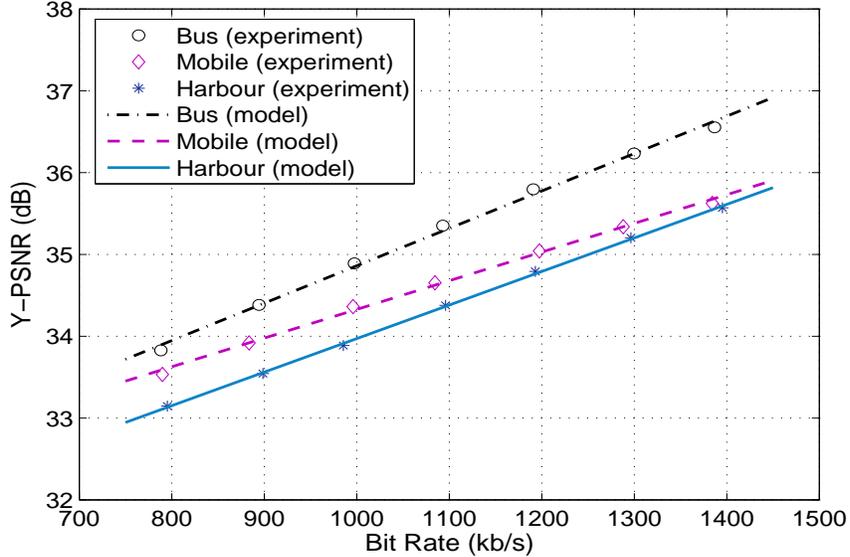}
\caption{Rate-distortion curves of three H.264/SVC MGS videos.}
\label{fig:mgs-rd}
\vspace{-0.1in}
\end{figure}

%%%%%%%%%%%%%%%%%%%%%%%%%%%%%%%%%%%%%%%%%%%%%%%%%%%%%%%%%%%%%%%%
\subsection{MGS Video over Femtocell CR Networks \label{sec:alg2}}

In this section, we address the problem of resource allocation for MGS videos over femtocell CR networks.  We first examine the case of a single FBS, and then the more general case of multiple non-interfering or interfering FBS's.  The algorithms for the single and non-interfering FBS cases are distributed ones and optimal. The algorithm for the interfering FBS case is a centralized one that can be executed at the MBS. To simplify notation, we omit the time slot index $t$ for most of the variables in this Section. For example, $x$ represents a variable for time slot $t$, $x^{-}$ represents the variable in time slot $(t-1)$, and $x^{+}$ represents the variable in time slot $(t+1)$.

%%%%%%%%%%%%%%%%%%%%%%%%%%%%%%%%%%%%%%%%%%%%%%%%%%%%%%%%%%%%%%%%
\subsubsection{Case of Single FBS}

%%%%%%%%%%%%%%%%%%%%%%%%%%%%%%
\paragraph{Formulation}

We first consider the case of a single FBS in the CR network, where the FBS can use all the $G$ available channels to stream videos to $K$ active CR users. Let $w_j$ be the PSNR of CR user $j$ at the beginning of time slot $t$ and $W_j$ the PSNR of CR user $j$ at the end of time slot $t$.  In time slot $t$, $w_j$ is already known; $W_j$ is a random variable that depends on channel condition and primary user activity; and $w_j^{+}$ is a {\em realization} of $W_j$.  Let $\xi_{0,j}$ and $\xi_{1,j}$ indicate the random packet losses from the MBS and FBS, respectively, to CR user $j$ in time slot $t$. 
That is, $\xi_{i,j}$ is $1$ with probability $\bar{P}_{i,j}=1-P_{i,j}$ and $0$ with probability $P_{i,j}$. Due to block fading channels, $P_{i,j}$'s do not change within the time slot. 

Let $\rho_{0,j}$ and $\rho_{1,j}$ be the portions of time slot $t$ when CR user $j$ receives video data from the MBS and FBS, respectively.  
The average PSNR is computed every $T$ time slots. We first have $W_j(0)=\alpha_j$, when $t=0$. In each time slot $t$, the CR user receives $\xi_{0,j} \rho_{0,j} B_0$ bits through the MBS, and $\xi_{1,j} \rho_{1,j} G B_1$ bits through the FBS (assuming that OFDM is used), which contribute an increase of $\beta (\xi_{0,j} \rho_{0,j} B_0 + \xi_{1,j} \rho_{1,j} G B_1) / T$ to the total PSNR in this $T$ time slot interval, according to~(\ref{eq:QuaMod}). Therefore we have the following recursive relationship:
$W_j = W_j^{-} + \beta (\xi_{0,j} \rho_{0,j} B_0 + \xi_{1,j} \rho_{1,j} G B_1) / T$ = $W_j^{-} + \xi_{0,j} \rho_{0,j} R_{0,j} + \xi_{1,j} \rho_{1,j} G R_{1,j}$, where $R_{0,j}=\beta B_0/T$ and $R_{1,j}=\beta B_1/T$. 

For proportional fairness, we aim to maximize the sum of the logarithms of the PSNRs of all CR users~\cite{Kelly98}.  We formulate a {\em multistage stochastic programming problem} by maximizing the {\em expectation} of the logarithm-sum at time $T$.
\begin{eqnarray}\label{eq:MultStage}
 \mbox{maximize:} && \hspace{-0.2in} \sum_{j=1}^K \mathbb{E}[\log(W_j(T))] \\
 \mbox{subject to:} && \hspace{-0.2in} W_j=W_j^{-}+\xi_{0,j} \rho_{0,j} R_{0,j} + \xi_{1,j} \rho_{1,j} G R_{1,j}, 
 %\nonumber \\
 %&& \hspace{0.8in} 
 \;\; j=1,\cdots,K, \; t=1,\cdots,T \nonumber \\
 && \hspace{-0.2in} \sum_{j=1}^K \rho_{i,j} \leq 1, \;\; i=0,1, \; t=1,\cdots,T \nonumber \\
 && \hspace{-0.2in} \rho_{i,j} \ge 0, \; i=0,1, \; j=1,\cdots,K, \; t=1,\cdots,T.   \nonumber
\end{eqnarray}
$R_{0,j}=\beta_j B_0/T$ and $R_{1,j}=\beta_jB_1/T$ are constants for the $j$-th MGS video. 

At the beginning of the last time slot $T$, a realization $\bm{\xi}_{[T-1]} = [ \vec{\xi}_1, \vec{\xi}_2, \cdots, \vec{\xi}_{T-1} ]$ is known, where $\vec{\xi}_t = [ \xi_{0,1}^t, \xi_{0,2}^t, \cdots, \xi_{0,K}^t, \xi_{1,1}^t, \cdots, \xi_{1,K}^t ]$, $t = 1, 2, \cdots, T-1$. It can be shown that the multistage stochastic programming problem (\ref{eq:MultStage}) can be decomposed into $T$ serial sub-problems, each to be solved in a time slot $t$~\cite{Hu10TW}.
\begin{eqnarray}\label{eq:SingStage}
 \mbox{maximize:} && \hspace{-0.2in} \sum_{j=1}^K \mathbb{E}\{\log(W_j)|\bm{\xi}_{[t-1]}\} \\ 
 \mbox{subject to:} && %\nonumber \\
 \hspace{-0.2in} W_j=W_j^{-}+\xi_{0,j} \rho_{0,j} R_{0,j} + \xi_{1,j} \rho_{1,j} G R_{1,j}, 
 %\nonumber \\ %\;\; j=1,\cdots,K  \nonumber \\
 %&& \hspace{1.6in} 
 j=1,\cdots,K \nonumber \\
 && \hspace{-0.2in} \sum_{j=1}^K \rho_{i,j} \le 1, \;\; i=0,1 \nonumber \\
 && \hspace{-0.2in} \rho_{i,j} \ge 0, \; i=0,1, \; j=1,\cdots,K,   \nonumber
\end{eqnarray}
where $\mathbb{E}\{\log(W_j)|\bm{\xi}_{[t-1]}\}$ denotes the {\em conditional expectation} of $\log(W_j)$ given realization $\bm{\xi}_{[t-1]}$.  $W_j^{-}$ is known given the realization. When $t=1$, the conditional expectation becomes an unconditional expectation. 

Since a CR user has only one transceiver, it can operate on either one or more licensed channels (i.e., connecting to the FBS) or the common channel (i.e., connecting to the MBS), but not both simultaneously.  Assume CR user $j$ operates on the common channel with probability $p_j$ and one or more licensed channels with probability $q_j$.  We then rewrite problem (\ref{eq:SingStage}) as
\begin{eqnarray}\label{eq:ProbOpt1}
 \mbox{maximize:} && \hspace{-0.2in} \sum_{j=1}^K \left[ p_j \bar{P}_{0,j} \log(W_j^{-}+\rho_{0,j} R_{0,j}) \; + 
 %q_j \bar{P}_{1,j} \log(W_j^{-} + \rho_{1,j} G R_{1,j}) %\right] \\
 %\right. \\
 %&& \hspace{0.5in} \left. 
 q_j \bar{P}_{1,j} \log(W_j^{-} + \rho_{1,j} G R_{1,j}) \right] \nonumber \\
 \mbox{subject to:} && \hspace{-0.2in} %\nonumber \\
 \sum_{j=1}^K \rho_{i,j} \le 1, \;\; i=0,1 \nonumber  \\
 && \hspace{-0.2in} p_j + q_j = 1, \;\; j=1,\cdots,K \nonumber \\
 && \hspace{-0.2in} \rho_{i,j}, \; p_j, \; q_j \ge 0, \;\; i=0,1, \; j=1,\cdots,K. \nonumber 
\end{eqnarray}

%%%%%%%%%%%%%%%%%%%%%%%%%%%%%%
\paragraph{Properties}

In this section, we analyze the formulated problem (\ref{eq:ProbOpt1}) and derive its properties.  We have Lemmas 1, 2, and 3 and Theorem 1 and provide the proofs in the following. 

%---------------
\begin{lemma}
Problem (\ref{eq:ProbOpt1}) is a convex optimization problem.
\end{lemma}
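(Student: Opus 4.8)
The plan is to establish directly the two defining properties of a convex program for Problem~(\ref{eq:ProbOpt1}): that its constraint region is convex, and that its objective, which is being maximized, is concave over that region. The constraint region is the routine part. Each constraint is affine in the decision variables $(p_j,q_j,\rho_{0,j},\rho_{1,j})$: the two time budgets $\sum_{j=1}^K \rho_{i,j}\le 1$ for $i=0,1$ are linear inequalities, the coupling conditions $p_j+q_j=1$ are linear equalities, and the sign restrictions $\rho_{i,j},p_j,q_j\ge 0$ are linear inequalities. The feasible set is thus a finite intersection of half-spaces and hyperplanes, hence a convex polyhedron, and the whole burden of the lemma falls on the concavity of the objective.

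For the objective I would exploit separability. The objective splits as $\sum_{j=1}^K[\phi_{0,j}(p_j,\rho_{0,j})+\phi_{1,j}(q_j,\rho_{1,j})]$, where $\phi_{0,j}(x,\rho)=\bar{P}_{0,j}\,x\log(W_j^{-}+\rho R_{0,j})$ and $\phi_{1,j}(x,\rho)=\bar{P}_{1,j}\,x\log(W_j^{-}+\rho G R_{1,j})$. The summands depend on disjoint blocks of variables, namely user $j$'s MBS pair $(p_j,\rho_{0,j})$ and FBS pair $(q_j,\rho_{1,j})$, so the aggregate Hessian is block diagonal and a sum over disjoint blocks is concave exactly when each block term is. This reduces the entire question to the concavity of a single bivariate prototype $\phi(x,\rho)=x\log(c+d\rho)$ with $c=W_j^{-}>0$ and $d>0$, the nonnegative factor $\bar{P}_{i,j}$ being an inessential scaling.

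The crux, and the step I expect to carry all the difficulty, is the joint concavity of $\phi(x,\rho)$ in its two coupled arguments. It is immediate that $\rho\mapsto\log(c+d\rho)$ is concave as a logarithm composed with an affine map, that $\phi$ is concave in $\rho$ for each fixed $x\ge 0$, and that $\phi$ is affine (hence concave) in $x$ for each fixed $\rho$; the subtlety is that separate concavity in each coordinate does not imply joint concavity, because of the bilinear coupling between the connection probability $x$ and the time fraction $\rho$. I would resolve this by the second-order criterion: form the $2\times 2$ Hessian of $\phi$ and test it for negative semidefiniteness on the domain $\{x\ge 0,\ \rho\ge 0\}$, i.e.\ check the signs of its leading principal minors. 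This sign condition is exactly where I expect the real work, and any tacit assumptions, to reside. Once negative semidefiniteness of each $\phi_{i,j}$ is confirmed, concavity of the full objective follows by summation, and maximizing a concave objective over the convex polyhedron of the first paragraph certifies that Problem~(\ref{eq:ProbOpt1}) is a convex optimization problem.
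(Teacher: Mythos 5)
Your architecture is the same as the paper's: the paper likewise dismisses the constraints as linear, splits the objective per user, and claims each per-user term is concave ``because its Hessian matrix is negative semi-definite.'' But the step you correctly isolate as the crux and then defer (``once negative semidefiniteness \ldots is confirmed'') is exactly the step that fails, for you and for the paper alike. Run the test on your prototype $\phi(x,\rho)=a\,x\log(c+d\rho)$ with $a,c,d>0$:
\begin{equation*}
\nabla^2\phi=\begin{pmatrix} 0 & \dfrac{ad}{c+d\rho}\\[4pt] \dfrac{ad}{c+d\rho} & -\dfrac{a\,x\,d^{2}}{(c+d\rho)^{2}}\end{pmatrix},
\qquad
\det\nabla^2\phi=-\frac{a^{2}d^{2}}{(c+d\rho)^{2}}<0 .
\end{equation*}
The determinant is strictly negative everywhere on the domain, so the Hessian is indefinite, never negative semidefinite; equivalently, a negative semidefinite matrix with a zero diagonal entry must have zero off-diagonal entries in that row, which is violated here. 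Concretely, along the ray $x=\rho=t$ one gets $\frac{d^{2}}{dt^{2}}\bigl[a\,t\log(c+dt)\bigr]=\frac{ad(2c+dt)}{(c+dt)^{2}}>0$, strict convexity. Nor do the constraints rescue you: from an interior feasible point one can move along such a segment (compensating with $q_j=1-p_j$, which only adds an affine term), so the objective is strictly convex along feasible directions. A side technicality: for \emph{semi}definiteness you must inspect all principal minors, not just the leading ones as you propose --- but that is moot here, since the order-two principal minor is itself negative.

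So your instinct that ``any tacit assumptions reside'' in this sign condition was exactly right; you needed to execute the computation and report the failure rather than anticipate success, because the bilinear coupling between $p_j$ and $\rho_{0,j}$ is fatal to joint concavity --- the objective is only biconcave (affine in $(p,q)$ for fixed $\rho$, concave in $\rho$ for fixed $(p,q)$), which is what the paper's Lemma 3 and Theorem 1 actually exploit. The paper's own proof asserts the false Hessian claim without computation, so it inherits the same defect. A repair requires a new idea, not a minor check: e.g., use Theorem 1's conclusion that optima have $p_j\in\{0,1\}$ and observe that for each fixed binary assignment the problem in $\rho$ alone is a genuine convex program; or remodel with the consumed time $\tau_{0,j}=p_j\rho_{0,j}$ as the budgeted quantity, which turns the term into the perspective $p_j\log\bigl(W_j^{-}+R_{0,j}\tau_{0,j}/p_j\bigr)$, jointly concave in $(p_j,\tau_{0,j})$. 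As written, though, the lemma's claimed proof route --- yours and the paper's --- does not go through.
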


\begin{proof} 
First, it can be shown that the single term 
$
p_j \bar{P}_{0,j} \log(W_j^{-} + \rho_{0,j} R_{0,j}) + q_j \bar{P}_{1,j} \log(W_j^{-} + \rho_{1,j} G R_{1,j})
$ 
is a concave function, because its {\em Hessian matrix} is negative semi-definite.  Then, the objective function is concave since the sum of concave functions is also concave. Finally, all the constraints are linear.  We conclude that problem (\ref{eq:ProbOpt1}) is convex 
%since it is a maximization problem with a concave objective function and convex constraints. It has 
with a unique optimal solution.
\end{proof}

%---------------
\begin{lemma}
If $[\rho,p,q]$ is a feasible solution to problem (\ref{eq:ProbOpt1}), then $[\rho,q,p]$ is also feasible. 
% to the problem. 
\end{lemma}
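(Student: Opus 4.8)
The plan is to verify feasibility directly, by checking that each of the three constraint families in problem~(\ref{eq:ProbOpt1}) is preserved under the interchange $(p,q) \mapsto (q,p)$ with $\rho$ held fixed. The essential observation is that the feasible region is \emph{symmetric} in the two probability vectors: neither the resource constraints on the time fractions $\rho_{i,j}$ nor the coupling constraints $p_j + q_j = 1$ distinguishes between the roles of $p_j$ and $q_j$. Once this symmetry is noted, the proof reduces to a line-by-line confirmation.

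First I would dispose of the resource constraints $\sum_{j=1}^K \rho_{i,j} \le 1$, $i=0,1$. These involve only the variables $\rho_{i,j}$, which are unchanged under the swap, so they hold for $[\rho,q,p]$ precisely because they hold for $[\rho,p,q]$. Next I would verify the coupling constraint: in the new solution $[\rho,q,p]$, the quantity occupying the ``$p_j$'' slot is the original $q_j$ and the quantity in the ``$q_j$'' slot is the original $p_j$, so their sum equals $q_j + p_j = p_j + q_j = 1$, which is $1$ exactly because $[\rho,p,q]$ is feasible. Finally, non-negativity is immediate, since the new $p_j = q_j \ge 0$ and the new $q_j = p_j \ge 0$ by the feasibility of $[\rho,p,q]$, while the $\rho_{i,j}$ are unchanged and hence still non-negative.

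I do not anticipate any substantive obstacle: the entire content of the lemma is the invariance of the constraint set under exchanging $p$ and $q$, and each constraint is checked in a single step. The only point worth stating explicitly is that the objective is \emph{not} claimed to be preserved—only feasibility is—so the verification is confined to the constraints and no argument about the value of the logarithm-sum is required.
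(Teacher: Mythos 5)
Your proof is correct and follows essentially the same approach as the paper's: both rest on the observation that the constraint set is symmetric under swapping $p$ and $q$, with the coupling constraint $p_j + q_j = 1$ preserved by commutativity of addition. Your version is simply more explicit, checking the $\rho$-only constraints and non-negativity as well, which the paper leaves implicit.
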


\begin{proof}
Since $[\rho,p,q]$ is feasible, we have $p + q =1$.  Switching the two probabilities, we still have $q + p =1$. Therefore, the derived new solution is also feasible. 
\end{proof}

%---------------
\begin{lemma}
Let the optimal solution be $[\rho^{\ast},p^{\ast},q^{\ast}]$.  If $p_j^{\ast} \ge q_j^{\ast}$, then $\bar{P}_{0,j} \log(W_j^{-} + \rho_{0,j}^{\ast} R_{0,j})$ is greater than or equal to $\bar{P}_{1,j} \log(W_j^{-} + \rho_{1,j}^{\ast} G R_{1,j})$. And vice versa.
\end{lemma}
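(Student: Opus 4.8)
The plan is to exploit the fact that, once the time-fraction variables are frozen at their optimal values $\rho^\ast$, the objective of problem~(\ref{eq:ProbOpt1}) becomes \emph{affine} in each pair $(p_j,q_j)$, so that the optimal split between the MBS and the FBS for user $j$ is governed by a single one-dimensional comparison. Concretely, I would first abbreviate the two per-user coefficients, evaluated at the optimum, as $A_j := \bar{P}_{0,j}\log(W_j^{-}+\rho_{0,j}^\ast R_{0,j})$ and $B_j := \bar{P}_{1,j}\log(W_j^{-}+\rho_{1,j}^\ast G R_{1,j})$ --- precisely the two quantities the lemma compares. With $\rho$ fixed at $\rho^\ast$, the $j$-th summand of the objective is exactly $p_j A_j + q_j B_j$, a linear function of $(p_j,q_j)$.

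Next I would argue that at the optimum each pair $(p_j^\ast,q_j^\ast)$ must separately maximize $p_j A_j + q_j B_j$ over its own simplex $\{p_j+q_j=1,\ p_j,q_j\ge 0\}$. The justification is structural: the coupling constraints $\sum_j \rho_{i,j}\le 1$ involve only $\rho$, while the constraints $p_j+q_j=1$ are separable across $j$ and do not involve $\rho$. Hence one may perturb a single pair $(p_j,q_j)$ along its feasible segment, holding $\rho^\ast$ and every other coordinate fixed, without leaving the feasible set. Were $(p_j^\ast,q_j^\ast)$ not optimal for this one-dimensional problem, such a perturbation would strictly increase the objective, contradicting optimality of $[\rho^\ast,p^\ast,q^\ast]$.

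I would then eliminate $q_j=1-p_j$ and write the slice as $B_j + p_j(A_j-B_j)$, a linear function of $p_j\in[0,1]$. Its maximizer is $p_j^\ast=1$ when $A_j>B_j$, $p_j^\ast=0$ when $A_j<B_j$, and arbitrary when $A_j=B_j$. Reading this off gives the claim: the case $A_j<B_j$ forces $p_j^\ast=0<q_j^\ast$, so whenever $p_j^\ast\ge q_j^\ast$ this case is excluded and necessarily $A_j\ge B_j$. The ``vice versa'' direction, $q_j^\ast\ge p_j^\ast \Rightarrow B_j\ge A_j$, follows by the symmetric argument, or equivalently by swapping $p_j$ and $q_j$ and invoking the feasibility guaranteed by Lemma~2.

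The step I expect to be the crux is the middle one --- showing that joint optimality descends to per-pair optimality of the affine slice. It is conceptually simple but is where the geometry of the constraint set does the real work: the $(p,q)$-block and the $\rho$-block are decoupled in the constraints, and the $(p,q)$-constraints further decouple across users, which is exactly what licenses the single-coordinate exchange. The degenerate case $p_j^\ast=q_j^\ast=1/2$ deserves one remark: it can be optimal only when $A_j=B_j$, so the (non-strict) inequality asserted by the lemma still holds, with equality, and needs no separate treatment.
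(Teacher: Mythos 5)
Your proof is correct, but it takes a genuinely different route from the paper's. Writing $A_j$ and $B_j$ for the two logarithmic terms as you do, the paper argues by a pairwise swap: assuming $A_j<B_j$ while $p_j^{\ast}\ge q_j^{\ast}$, it exchanges the values of $p_j^{\ast}$ and $q_j^{\ast}$ (feasible by Lemma 2), claims the swapped solution has a strictly larger objective, and contradicts optimality. You instead freeze $\rho=\rho^{\ast}$, note that the $(p_j,q_j)$-constraints decouple both from the $\rho$-constraints and across users, and conclude that each $(p_j^{\ast},q_j^{\ast})$ must maximize the affine slice $B_j+p_j(A_j-B_j)$ over $p_j\in[0,1]$, reading the lemma off the location of the maximizer. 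In effect you establish the extreme-point characterization of Theorem 1 first and obtain Lemma 3 as a corollary, inverting the paper's logical order (the paper derives Theorem 1 from Lemma 3). Your route buys two things. First, it is airtight in the tie case $p_j^{\ast}=q_j^{\ast}$: there the paper's swap yields an \emph{equal}, not strictly larger, objective value, since the improvement is $(p_j^{\ast}-q_j^{\ast})(B_j-A_j)$, so the swap argument as written really needs $p_j^{\ast}>q_j^{\ast}$; your perturbation to the extreme point $(p_j,q_j)=(0,1)$ still gives a strict gain of $\frac{1}{2}(B_j-A_j)>0$ and so closes that small gap. Second, it delivers Theorem 1 essentially for free. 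What the paper's swap buys in exchange is brevity and a direct use of Lemma 2, which your argument does not actually need.
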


\begin{proof}
Assume $\bar{P}_{0,j} \log(W_j^{-} + \rho_{0,j}^{\ast} R_{0,j})$ is less than $\bar{P}_{1,j} \log(W_j^{-} + \rho_{1,j}^{\ast} G R_{1,j})$.  Since $p_j^{\ast} \ge q_j^{\ast}$, the sum of the product $p_j^{\ast} \bar{P}_{0,j} \log(W_j^{-} + \rho_{0,j}^{\ast} R_{0,j}) + q_j^{\ast} \bar{P}_{1,j} \log(W_j^{-} + \rho_{1,j}^{\ast} G R_{1,j})$ is smaller than the sum of the product $q_j^{\ast} \bar{P}_{0,j} \log(W_j^{-} + \rho_{0,j}^{\ast} R_{0,j}) + p_j^{\ast} \bar{P}_{1,j} \log(W_j^{-} + \rho_{1,j}^{\ast} G R_{1,j})$. Thus we can obtain an objective value larger than the optimum by switching the values of $p_j^{\ast}$ and $q_j^{\ast}$, which is still feasible according to Lemma 2.  This conflicts with the assumption that $[\rho^{\ast},p^{\ast},q^{\ast}]$ is optimal.  The reverse statement can be proved similarly. 
\end{proof}

%---------------
\begin{theorem}
Let the optimal solution be $[\rho^{\ast},p^{\ast},q^{\ast}]$.  If $p_j^{\ast} > q_j^{\ast}$, then we have $p_j^{\ast} = 1$ and $q_j^{\ast} = 0$.  Otherwise, we have $p_j^{\ast} = 0$ and $q_j^{\ast} = 1$. 
\end{theorem}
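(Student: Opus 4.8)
The plan is to reduce problem~(\ref{eq:ProbOpt1}) to a per-user linear maximization by fixing the optimal resource shares $\rho^{\ast}$ and exploiting the fact that the capacity constraints $\sum_{j}\rho_{i,j}\le 1$ are decoupled from the connection probabilities. For a fixed user $j$ I would abbreviate the two payoffs evaluated at the optimum as $a_j := \bar{P}_{0,j}\log(W_j^{-}+\rho_{0,j}^{\ast}R_{0,j})$ and $b_j := \bar{P}_{1,j}\log(W_j^{-}+\rho_{1,j}^{\ast}GR_{1,j})$, so that the $j$-th summand of the objective is the linear form $p_j a_j + q_j b_j$. The key observation is that replacing $(p_j^{\ast},q_j^{\ast})$ by any other pair $(\tilde p_j,\tilde q_j)$ with $\tilde p_j+\tilde q_j=1$ and $\tilde p_j,\tilde q_j\ge 0$, while holding $\rho^{\ast}$ and every other user's variables fixed, leaves the point feasible. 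Optimality of $[\rho^{\ast},p^{\ast},q^{\ast}]$ therefore forces $(p_j^{\ast},q_j^{\ast})$ to maximize $p_j a_j + q_j b_j$ over the segment $\{p_j+q_j=1,\ p_j,q_j\ge 0\}$.

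Next I would analyze this one-dimensional linear maximization. A linear function on the segment joining $(1,0)$ and $(0,1)$ attains its maximum at the vertex $(1,0)$ when $a_j>b_j$, at the vertex $(0,1)$ when $a_j<b_j$, and uniformly over the whole segment when $a_j=b_j$. Lemma~3 supplies exactly the sign alignment I need: $p_j^{\ast}>q_j^{\ast}$ implies $a_j\ge b_j$, and $p_j^{\ast}\le q_j^{\ast}$ implies $a_j\le b_j$. In the two strict cases the maximizer is the unique vertex, which immediately delivers $(p_j^{\ast},q_j^{\ast})=(1,0)$ or $(0,1)$ in agreement with the claim.

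The main obstacle is the tie $a_j=b_j$, where the linear argument alone leaves $(p_j^{\ast},q_j^{\ast})$ undetermined; here I would lean on the uniqueness of the optimum guaranteed by Lemma~1. If $a_j=b_j$ and the optimum were a non-vertex point of the segment, then moving to either vertex would preserve the objective value yet produce a genuinely distinct feasible optimum, contradicting uniqueness. Hence $(p_j^{\ast},q_j^{\ast})$ must be a vertex in every case, namely $(1,0)$ or $(0,1)$. Finally I would read off the dichotomy: the vertex $(1,0)$ satisfies $p_j^{\ast}>q_j^{\ast}$ while $(0,1)$ satisfies $p_j^{\ast}<q_j^{\ast}$, so the former is selected exactly when $p_j^{\ast}>q_j^{\ast}$ and the latter otherwise, which is the assertion of the theorem. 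Note that the same argument shows the knife-edge case $p_j^{\ast}=q_j^{\ast}$ never arises at the true optimum, since it would force $a_j=b_j$ at a non-vertex point and again violate uniqueness.
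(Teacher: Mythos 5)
Your proof is correct and takes essentially the same route as the paper's: Lemma 3 supplies the sign alignment $\bar{P}_{0,j}\log(W_j^{-}+\rho_{0,j}^{\ast}R_{0,j}) \ge \bar{P}_{1,j}\log(W_j^{-}+\rho_{1,j}^{\ast}GR_{1,j})$, and linearity of the objective in $(p_j,q_j)$ over the segment $p_j+q_j=1$, $p_j,q_j\ge 0$ then pushes the optimum to a vertex. Your only addition is the explicit handling of the tie case via the uniqueness asserted in Lemma 1 --- a point the paper's terser proof (which only argues the optimal value \emph{can} be achieved at a vertex) glosses over --- but this is a refinement of the same argument, not a different approach.
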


\begin{proof}
If $p_j^{\ast} > q_j^{\ast}$, we have 
$\bar{P}_{0,j} \log(W_j^{-} + \rho_{0,j}^{\ast} R_{0,j}) \geq \bar{P}_{1,j} \log(W_j^{-} + \rho_{1,j}^{\ast} G R_{1,j})$ according to Lemma 3.  Since the objective function is linear with respect to $p_j$ and $q_j$, the optimal value can be achieved by setting $p_j$ to its maximum value 1 and $q_j$ to its minimum value 0.  The reverse statement can be proved similarly. 
\end{proof}

According to Theorem 1, a CR user is connected to either the MBS or the FBS for the {\em entire} duration of a time slot in the optimal solution.  That is, it does not switch between base stations during a time slot under optimal scheduling.

%%%%%%%%%%%%%%%%%%%%%%%%%%%%%%%%%%
\paragraph{Distributed Solution Algorithm}

To solve problem (\ref{eq:ProbOpt1}), we define non-negative {\em dual variables} ${\mathcal \lambda}=[\lambda_0, \lambda_1]$ for the two inequality constraints. The
{\em Lagrangian function} is
\begin{eqnarray}\label{eq:LagProbOpt1}
\hspace{-0.2in} \mathcal{L}(p,\rho,\lambda) \hspace{-0.025in}&=&\hspace{-0.025in} \sum_{j=1}^K \left[ p_j \bar{P}_{0,j} \log(W_j^{-} + \rho_{0,j} R_{0,j}) + 
%\right. \nonumber \\
%&& \left. 
(1-p_j) \bar{P}_{1,j} \log(W_j^{-} + \rho_{1,j} G R_{1,j}) \right] +  
\nonumber \\
&& \lambda_0(1 - \sum_{j=1}^K \rho_{0,j})+ \lambda_1(1 - \sum_{j=1}^K \rho_{1,j}) \nonumber \\
&=& \hspace{-0.025in} \sum_{j=1}^K \mathcal{L}_j(p_j,\rho_{0,j},\rho_{1,j},\lambda_0,\lambda_1) \hspace{-0.025in} + \hspace{-0.025in} \lambda_0 \hspace{-0.025in} + \hspace{-0.025in} \lambda_1, 
\end{eqnarray}
where 
\begin{eqnarray}
\hspace{0.1in} \mathcal{L}_j(p_j,\rho_{0,j},\rho_{1,j},\lambda_0,\lambda_1) = p_j \bar{P}_{0,j} \log(W_j^{-} + \rho_{0,j}  R_{0,j}) + \nonumber \\
\hspace{0.2in} (1 - p_j) \bar{P}_{1,j} \log(W_j^{-} + \rho_{1,j} G R_{1,j}) - \lambda_0 \rho_{0,j} - \lambda_1 \rho_{1,j}. \nonumber
\end{eqnarray}

The corresponding problem can be decomposed into $K$ sub-problems and solved iteratively.  In Step $\tau \geq 1$, for given $\lambda_0(\tau)$ and $\lambda_1(\tau)$ values, each CR user $j$ solves the following sub-problem using local information.
\begin{eqnarray}\label{eq:ArgSubOpt1}
%&& \hspace{-0.3in} 
[p_j^{\ast}(\tau), \rho_{0,j}^{\ast}(\tau), \rho_{1,j}^{\ast}(\tau)] 
%\nonumber \\
%&& \hspace{-0.5in} 
= \stackbin[p_j,\rho_{0,j},\rho_{1,j} \ge 0]{}{\arg\max}   
 \mathcal{L}_j(p_j,\rho_{0,j},\rho_{1,j},\lambda_0(\tau),\lambda_1(\tau)).
\end{eqnarray}
There is a unique optimal solution since the objective function in (\ref{eq:ArgSubOpt1}) is concave.  The CR users then exchange their solutions.  The {\em master dual problem}, for given $p(\tau)$ and $\rho(\tau)$, is:
\begin{eqnarray}\label{eq:MaterOpt1}
%&& \hspace{-0.2in} 
\min_{\lambda\ge 0} \mathcal{L}(p(\tau),\rho(\tau),\lambda) 
%\nonumber \\ 
%&& \hspace{-0.4in} 
= \sum_{j=1}^K \mathcal{L}_j(p_j(\tau),\rho_{0,j}(\tau),\rho_{1,j}(\tau),\lambda_0,\lambda_1)+\lambda_0+\lambda_1. 
\end{eqnarray}
Since the Lagrangian function is differentiable, the 
{\em gradient iteration} approach can be used.
\begin{equation}\label{eq:IterOpt1}
\lambda_i(\tau+1) = \left[ \lambda_i(\tau) - s \times \left( 1 - \sum_{j=1}^K \rho_{i,j}^{\ast}(\tau) \right) \right]^+, \; i=0,1,
\end{equation}
where $s$ is a sufficiently small positive {\em step size} and $[\cdot]^+$ denotes the projection onto the nonnegative axis. The updated $\lambda_i(\tau+1)$ will again be used to solve the sub-problems, and so forth. Since the problem is convex, we have {\em strong duality}; the {\em duality gap} between the primal and dual problems is zero. The dual variables $\lambda(\tau)$ will converge to the optimal values as $\tau$ goes to infinity. Since the optimal solution to (\ref{eq:ArgSubOpt1}) is unique, the primal variables $p(\tau)$ and $\rho_{i,j}(\tau)$ will also converge to their optimal values when $\tau$ is sufficiently large.

The distributed solution procedure is presented in Table~\ref{tab:Opt1}. In the table, Steps 3--8 solve the sub-problem in (\ref{eq:ArgSubOpt1}); Step 9 updates the dual variables.  The threshold $\phi$ is a prescribed small value with $0 \leq \phi \ll 1$. The algorithm terminates when the dual variables are sufficiently close to the optimal values. 

\begin{table}[!t]
\begin{center}
\caption{Algorithm for the Case of Single FBS}
\begin{tabular}{ll}
\hline 
1: & Set $\tau=0$, $\lambda_0(0)$ and $\lambda_1(0)$ to some nonnegative value; \\
2: & DO \ \ \% (each CR user $j$ executes Steps 3--8)\\
3: & $\;\;$ $\rho_{0,j}(\tau) \hspace{-0.025in} = \hspace{-0.025in} \left[ \frac{\bar{P}_{0,j}}{\lambda_0(\tau)} \hspace{-0.025in} - \hspace{-0.025in} \frac{W_j^{-}}{R_{0,j}} \right]^+$, $\rho_{1,j}(\tau) \hspace{-0.025in} = \hspace{-0.025in} \left[ \frac{\bar{P}_{1,j}}{\lambda_1(\tau)} \hspace{-0.025in} - \hspace{-0.025in} \frac{W_j^{-}}{R_{1,j} G} \right]^+$; \\
4: & $\;\;$ IF $\left[ \bar{P}_{0,j} \log(W_j^{-}+\rho_{0,j}(\tau) R_{0,j})-\lambda_0(\tau)\rho_{0,j}(\tau) \right] >$ \\
   & $\;\;$ $\left[ \bar{P}_{1,j} \log(W_j^{-} + \rho_{1,j}(\tau) G R_{1,j}) - \lambda_1(\tau) \rho_{1,j}(\tau) \right]$ \\
5: & $\;\;\;\;\;$ Set $p_j(\tau)=1$ and $\rho_{1,j}(\tau)=0$; \\
6: & $\;\;$ ELSE  \\
7: & $\;\;\;\;\;$ Set $p_j(\tau)=0$ and $\rho_{0,j}(\tau)=0$; \\
8: & $\;\;$ END IF \\
9: & $\;\;$ MBS updates $\lambda_i(\tau+1)$ as in (\ref{eq:IterOpt1}); \\
10: & $\;\;$ $\tau=\tau+1$; \\
11: & WHILE $\left( \sum_{i=0}^{1}(\lambda_i(\tau+1)-\lambda_i(\tau))^2 > \phi \right)$ \\
\hline 
\end{tabular}
\label{tab:Opt1}
\end{center}
%\vspace{-0.15in}
\end{table}

%%%%%%%%%%%%%%%%%%%%%%%%%%%%%%%%%%%%%%%%%%%%%%%%%%%%%%%%%%%%%%%%%
\subsubsection{Case of Multiple Non-interfering FBS's \label{subsec:mulnifbs}}

We next consider the case of $N>1$ non-interfering FBS's. The coverages of the FBS's do not overlap with each other, as FBS 1 and 2 in Fig.~\ref{fig:netmod2}.  Consequently, each FBS can use all the available licensed channels without interfering other FBS's.  Assume each CR user knows the nearest FBS and is associate with it.  Let $\mathcal{U}_i$ denote the set of CR users associated with FBS $i$.  The resource allocation problem becomes:
\begin{eqnarray}\label{eq:ProbOptDM}
 \mbox{maximize:} && \hspace{-0.2in} \sum_{j=1}^K p_j \bar{P}_{0,j} \log(W_j^{-} + \rho_{0,j} R_{0,j}) + 
 %\\
 %&& 
 \sum_{i=1}^N \sum_{j\in\mathcal{U}_i} q_j \bar{P}_{i,j} \log(W_j^{-} + \rho_{i,j} G R_{i,j}) \\ %\nonumber \\
 \mbox{subject to:} && \hspace{-0.2in} \sum_{j=1}^K \rho_{0,j} \leq 1 \nonumber \\
 && \hspace{-0.2in} \sum_{j\in\mathcal{U}_i} \rho_{i,j} \le 1, \;\; i=1,\cdots,N \nonumber \\
 && \hspace{-0.2in} p_j + q_j = 1, \;\; j=1,\cdots,K \nonumber \\
 && \hspace{-0.2in} \rho_{i,j}, \; p_j, \; q_j \ge 0, \;\; 
 i=1,\cdots,N,\; j=1,\cdots,K. \nonumber
\end{eqnarray}
Since all the available channels can be allocated to each FBS with spatial reuse,
problem (\ref{eq:ProbOptDM}) can be solved using the algorithm in Table~\ref{tab:Opt1} with some modified notation: $\rho_{1,j}(\tau)$ now becomes $\rho_{i,j}(\tau)$ and $\lambda_1(\tau)$ becomes $\lambda_i(\tau)$, $i=1, \cdots, N$.  The dual variables are iteratively updated as
\begin{eqnarray}
&& \hspace{-0.4in} \lambda_0(\tau+1)=\left[\lambda_0(\tau)-s \times \left( 1 - \sum_{j=1}^K \rho_{0,j}^{\ast}(\tau) \right) \right]^+ \label{eq:IterOptM0} \\
&& \hspace{-0.4in} \lambda_i(\tau+1)=\left[\lambda_i(\tau)-s \times \left( 1 - \sum_{j\in \mathcal{U}_i} \rho_{i,j}^{\ast}(\tau) \right) \right]^+, 
%\nonumber \\
%&& \hspace{1.8in} 
\;\; i=1,\cdots,N.  \label{eq:IterOptM1}
\end{eqnarray}
The modified solution algorithm is presented in Table~\ref{tab:OptDisM}. As in the case of single FBS, the algorithm is jointly executed by the CR users and MBS, by iteratively updating the dual variables $\lambda_0(\tau)$ and $\lambda_i(\tau)$'s, and the resource allocations $\rho_{0,j}^{\ast}(\tau)$ and $\rho_{i,j}^{\ast}(\tau)$'s. It can be shown that the distributed algorithm can produce the optimal solution for problem (\ref{eq:ProbOptDM}). 

\begin{table}[!t]
\begin{center}
\caption{Algorithm for the Case of Multiple Non-Interfering FBS's}
\begin{tabular}{ll}
\hline 
1: & Set $\tau=0$, and $\lambda_0(0)$ and $\lambda_i(0)$ to some nonnegative values, for \\
   & all $i$; \\
2: & DO \ \ \% (each CR user $j$ executes Steps 3--8)\\
3: & $\;\;$ $\rho_{0,j}(\tau) \hspace{-0.025in} = \hspace{-0.025in} \left[ \frac{\bar{P}_{0,j}}{\lambda_0(\tau)} \hspace{-0.025in} - \hspace{-0.025in} \frac{W_j^{-}}{R_{0,j}} \right]^+$, 
   $\rho_{i,j}(\tau) \hspace{-0.025in} = \hspace{-0.025in} \left[ \frac{\bar{P}_{i,j}}{\lambda_i(\tau)} \hspace{-0.025in} - \hspace{-0.025in} \frac{W_j^{-}}{R_{i,j} G} \right]^+$ \hspace{-0.05in}; \\
4: & $\;\;$ IF $\left[ \bar{P}_{0,j} \log(W_j^{-}+\rho_{0,j}(\tau) R_{0,j})-\lambda_0(\tau)\rho_{0,j}(\tau) \right] >$ \\
   & $\;\;$ $\left[ \bar{P}_{i,j} \log(W_j^{-} + \rho_{i,j}(\tau) G R_{i,j}) - \lambda_i(\tau) \rho_{i,j}(\tau) \right]$ \\
5: & $\;\;\;\;\;$ Set $p_j(\tau)=1$ and $\rho_{i,j}(\tau)=0$; \\
6: & $\;\;$ ELSE  \\
7: & $\;\;\;\;\;$ Set $p_j(\tau)=0$ and $\rho_{0,j}(\tau)=0$; \\
8: & $\;\;$ END IF \\
9: & $\;\;$ MBS updates $\lambda_i(\tau+1)$ as in (\ref{eq:IterOptM0}) and (\ref{eq:IterOptM1}); \\
10: & $\;\;$ $\tau=\tau+1$; \\ % and go to Step 2; \\
11: & WHILE $\left( \sum_{i=0}^{N}(\lambda_i(\tau+1)-\lambda_i(\tau))^2 > \phi \right)$ \\
\hline 
\end{tabular}
\label{tab:OptDisM}
\end{center}
%\vspace{-0.15in}
\end{table}

%%%%%%%%%%%%%%%%%%%%%%%%%%%%%%%%%%%%%%%%%%%%%%%%%%%%%%%%%%%%%%%%%
\subsubsection{Case of Multiple Interfering FBS's}

%%%%%%%%%%%%%%%%%%%%%%%%%%%%%%
\paragraph{Formulation}

Finally, we consider the case of multiple interfering FBS's.  Assume that the coverages of some FBS's overlap with each other, as FBS 3 and 4 in Fig.~\ref{fig:netmod2}. They cannot use the same channel simultaneously, but have to compete for the available channels in the transmission phase.  Define {\em channel allocation variables} $c_{i,m}$ for time slot $t$ as: 
\begin{equation} \label{eq:cimt}
c_{i,m} = \left\{ \begin{array}{ll}
     1, & \mbox{if channel $m$ is allocated to FBS $i$} \\
     0, & \mbox{otherwise.}
                    \end{array} \right.                   
\end{equation}
Given an allocation, the expected number of available channels for FBS $i$ is $G_i \hspace{-0.025in} = \hspace{-0.025in} \sum_{m\in\mathcal{A}(t)} c_{i,m} P_m^A$.  

We use {\em interference graph} to model the case of overlapping coverages, which is defined below.

\begin{definition} 
An {\em interference graph} $G_I=(V_I,E_I)$ is an undirected graph where each vertex represents an FBS and each edge indicates interference between the two end FBS's.
\end{definition}

For the example given in Fig.~\ref{fig:netmod2}, we can derive an interference graph as shown in Fig.~\ref{fig:interferencegraph}. FBS 3 and 4 cannot use the same channel simultaneously, as summarized in the following lemma.  

\begin{figure}[!t]
\centering
\includegraphics[width=4.0in]{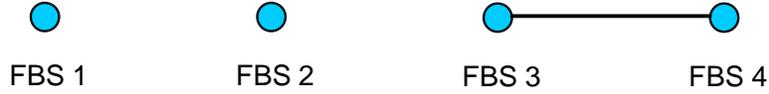}
\caption{Interference graph for the femtocell CR network shown in Fig.~\ref{fig:netmod2}.}
\label{fig:interferencegraph}
%\vspace{-0.15in}
\end{figure}

\begin{lemma}
If channel $m$ is allocated to FBS $i$, the neighboring vertices of FBS $i$ in the interference graph $G_I$, denoted as $\mathcal{R}(i)$, cannot use the same channel $m$ simultaneously.  
\end{lemma}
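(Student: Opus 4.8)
The plan is to prove this directly from the definition of the interference graph together with the physical interference constraint established in the formulation. First I would recall that, by the definition of $G_I=(V_I,E_I)$, a vertex $j$ lies in the neighbor set $\mathcal{R}(i)$ precisely when $(i,j)\in E_I$, and that by construction an edge of $G_I$ is placed exactly when the coverages of the two end FBS's overlap, i.e., when the two FBS's mutually interfere. The key link to establish is therefore that ``edge in $G_I$'' coincides with ``cannot share a channel,'' which was asserted physically in the formulation when the channel allocation variables $c_{i,m}$ in (\ref{eq:cimt}) were introduced for the overlapping-coverage case.

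Next I would argue by contradiction. Suppose channel $m$ is allocated to FBS $i$ (so $c_{i,m}=1$) and, simultaneously, some neighbor $j\in\mathcal{R}(i)$ also transmits on channel $m$ (so $c_{j,m}=1$). Since $(i,j)\in E_I$, the two transmissions overlap in both the spatial (coverage) and the spectral (same channel $m$) domains. I would then invoke the channel/SINR model of Section~\ref{sec:mod4}: the mutual interference drives the received SINR $X$ at the affected CR users below the decoding threshold $H$, so by (\ref{eq:PrFad}) the packet-loss probability becomes prohibitive and no useful video data is delivered on channel $m$. Hence such a simultaneous allocation is infeasible, contradicting the supposition; therefore $c_{j,m}=0$ for every $j\in\mathcal{R}(i)$.

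The conclusion can be stated compactly as the per-edge constraint $c_{i,m}+c_{j,m}\le 1$ for all $(i,j)\in E_I$ and every available channel $m\in\mathcal{A}(t)$, which is the form that later feeds into the channel-allocation procedure. I do not expect a serious obstacle here, since the statement is essentially a restatement of the interference-graph semantics; the only point requiring care is to make explicit that the edge set $E_I$ is defined from overlapping coverage and hence coincides exactly with the set of FBS pairs forbidden to reuse a channel, so that ``neighbor in $G_I$'' and ``co-channel conflict'' refer to the same relation rather than two a priori distinct ones.
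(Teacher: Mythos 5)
Your proposal is correct, and in substance it coincides with the paper's treatment --- in fact, the paper states this lemma with no proof at all, regarding it as an immediate consequence of the interference-graph definition (an edge of $G_I$ means the two end FBS's interfere) together with the modeling assumption, stated just before (\ref{eq:cimt}), that FBS's with overlapping coverage ``cannot use the same channel simultaneously, but have to compete for the available channels.'' Your core step --- making explicit that ``neighbor in $G_I$'' and ``co-channel conflict'' are the same relation --- is exactly that definitional unpacking, and your per-edge form $c_{i,m}+c_{j,m}\le 1$ for $(i,j)\in E_I$ is precisely what the paper encodes as $\sum_{i=1}^N d_i^k c_{i,m} \le 1$. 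The one caveat concerns your contradiction step invoking the decoding threshold $H$ and (\ref{eq:PrFad}): the paper's channel model only specifies the SINR distribution from a base station to a user and never quantifies cross-FBS interference, so one cannot formally deduce from that model that simultaneous co-channel use drives the SINR below $H$. That detour is harmless here but also unnecessary --- channel exclusivity between interfering FBS's is an assumption of the system model, not a derived consequence, and the lemma needs nothing beyond it.
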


Further define index variables $d_i^k$ as
%which is 1 if FBS $i$ is an endpoint of link $k \in G_I$ and 0 otherwise, $i=1, \cdots, N$. 
\begin{equation} \label{eq:dik}
d_i^k = \left\{ \begin{array}{ll}
     1, & \mbox{if FBS $i$ is an endpoint of link $k \in G_I$} \\
     0, & \mbox{otherwise.}
                    \end{array} \right.                   
\end{equation}
The interference constraint can be described as $\sum_{i=1}^N d_i^k c_{i,m} \le 1$, for $m=0,\cdots,M$, and for all link $k \in G_I$. 
We then have the following problem formulation. 
\begin{eqnarray}\label{eq:ProbOptM}
 \mbox{maximize:} && \hspace{-0.25in} \sum_{j=1}^K  p_j \bar{P}_{0,j} \log(W_j^{-}+\rho_{0,j} R_{0,j}) + 
 %\\
 %&& 
 \sum_{i=1}^N \sum_{j\in\mathcal{U}_i} q_j \bar{P}_{i,j} \log(W_j^{-} + \rho_{i,j} G_i R_{i,j}) \\ %\nonumber \\
 \mbox{subject to:} && \hspace{-0.2in} \sum_{j=1}^K \rho_{0,j} \leq 1 \nonumber \\
 && \hspace{-0.25in} \sum_{j\in\mathcal{U}_i} \rho_{i,j} \le 1, \;\; i=1,\cdots,N \nonumber  \\
 && \hspace{-0.25in} p_j + q_j = 1, \;\; j=1,\cdots,K \nonumber \\
 && \hspace{-0.25in} G_i = \sum_{m\in\mathcal{A}(t)} c_{i,m} P_m^A, \;\; i=1,\cdots,N \nonumber \\
 && \hspace{-0.25in} \sum_{i=1}^N d_i^k c_{i,m} \le 1, m=0,\cdots,M, \mbox{for link }  k \in G_I, \nonumber \\
 && \hspace{-0.25in} \rho_{i,j}, p_j, q_j, c_{i,m} \ge 0, \;\; 
 %\nonumber \\
 %&& 
 i=1,\cdots,N,\; j=1,\cdots,K, \; m=0,\cdots,M. \nonumber
\end{eqnarray}

%%%%%%%%%%%%%%%%%%%%%%%%%%%%%%
\paragraph{Solution Algorithm \label{subsubsec:ifbs}}

The optimal solution to problem (\ref{eq:ProbOptM}) depends on the channel allocation variables $c_{i,m}$. Problem (\ref{eq:ProbOptM}) can be solved with the algorithm in Table~\ref{tab:OptDisM} if the $c_{i,m}$'s are known.  Let $Q(\bm{c})$ be the suboptimal objective value for a given channel allocation $\bm{c}$, where $\bm{c}=[\vec{c}_1, \vec{c}_2, \cdots, \vec{c}_N]$ and $\vec{c}_i$ is a vector of elements $c_{i,m}$, for FBS $i$ and channels $m \in \mathcal{A}(t)$. If all the FBS's are disjointedly distributed with no overlap, each FBS can use all the available channels.  We have $c_{i,m}=1$ for all $i$ and $m \in \mathcal{A}(t)$, i.e., it is reduced to the case in Section~\ref{subsec:mulnifbs}.  

To solve problem (\ref{eq:ProbOptM}), we first apply a {\em greedy algorithm} to allocate the available channels in $\mathcal{A}(t)$ to the FBS's (i.e., to determine $\bm{c}$).  We then apply the algorithm in Table~\ref{tab:OptDisM} with the computed $\bm{c}$ to obtain a near-optimal solution.  Let $\bm{e}_{i,m}$ be a 
%unit vector 
matrix with $1$ at position $\{i,m\}$ 
%(i.e., position $(i-1) \times M + m$) 
and $0$ at all other positions, representing the allocation of channel $m \in \mathcal{A}(t)$ to FBS $i$.  The greedy channel allocation algorithm is given in Table~\ref{tab:ChanAloc}, where the FBS-channel pair that can achieve the largest increase in 
%the objective value 
$Q(\cdot)$ is chosen in each iteration. The worst case complexity of the greedy algorithm is $O(N^2 M^2)$.    

\begin{table}[!t]
\begin{center}
\caption{Channel Allocation Algorithm for Case of Interfering FBS's}
\begin{tabular}{ll}
\hline
1: & Initialize $\bm{c}$ to a zero matrix, FBS set $\mathcal{N}=\{1,\cdots,N\}$, and \\
   & FBS-channel set $\mathcal{C}=\mathcal{N}\times\mathcal{A}(t)$; \\
2: & WHILE ($\mathcal{C}$ is not empty) \\
3: & $\;\;$ Find FBS-channel pair $\{i',m'\}$, such that \\
%   & $\;\;\;\;\;\;$ $\{i',m'\} = \arg\max_{\{i,m\} \in \mathcal{C}} \{Q(c^t + e_{i,m}) - Q(c^t) \}$; \\
   & $\;\;\;\;\;\;$ $\{i',m'\} = \stackbin[\{i,m\} \in \mathcal{C}]{}{\arg\max} \{Q(\bm{c} + \bm{e}_{i,m}) - Q(\bm{c}) \}$; \\
4: & $\;\;$ Set $\bm{c} = \bm{c} + \bm{e}_{i',m'}$; \\
5: & $\;\;$ Remove $\{i',m'\}$ from $\mathcal{C}$; \\
6: & $\;\;$ Remove $\mathcal{R}(i') \times m'$ from $\mathcal{C}$; \\
7: & END WHILE \\
\hline
\end{tabular}
\label{tab:ChanAloc}
\end{center}
%\vspace{-0.15in}
\end{table}

%%%%%%%%%%%%%%%%%%%%%%%%%%%%%%
\paragraph{Performance Lower Bound}

We next present a lower bound for the greedy algorithm. Let $e(l)$ be the $l$-th FBS-channel pair chosen in the greedy algorithm, and $\pi_l$ denote the sequence $\{e(1),e(2),\cdots,e(l)\}$.  The increase in object value (\ref{eq:ProbOptM}) due to the $l$-th allocated FBS-channel pair is denoted as
\begin{equation}
\Delta_l := \Delta(\pi_l, \pi_{l-1}) = Q(\pi_l) - Q(\pi_{l-1}). 
\end{equation}
Since $Q(\pi_0)=Q(\emptyset)=0$, we have 
\begin{eqnarray}
\sum_{l=1}^L \Delta_l &=& Q(\pi_L)-Q(\pi_{L-1})+ \cdots + Q(\pi_1) - Q(\pi_0) \nonumber \\
&=& Q(\pi_L) - Q(\pi_0) = Q(\pi_L). \nonumber
\end{eqnarray}

For two FBS-channel pairs $e(l)$ and $e(l')$, we say $e(l)$ {\em conflicts with} $e(l')$ when there is an edge connecting the FBS in $e(l)$ and the FBS in $e(l')$ in the interference graph $G_I$, and the two FBS's choose the same channel.  Let $\Omega$ be the global optimal solution. We define $\omega_l$ as the subset of $\Omega$ that conflicts with allocation $e(l)$ but not with the previous allocations $\{e(1), e(2), \cdots, e(l-1)\}$.  

\begin{lemma} \label{lemma1:5}
Assume the greedy algorithm in Table~\ref{tab:ChanAloc} stops in $L$ steps. The global optimal solution $\Omega$ can be partitioned into $L$ non-overlapping subsets $\omega_l$, $l=1,2,\cdots,L$. 
\end{lemma}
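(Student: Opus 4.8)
The plan is to show directly from the definition that the sets $\omega_1,\ldots,\omega_L$ are pairwise disjoint, and then to use the bookkeeping of the greedy algorithm in Table~\ref{tab:ChanAloc} to show that their union is all of $\Omega$. Since each $\omega_l$ is by construction a subset of $\Omega$, establishing these two facts is exactly what is needed to conclude that $\{\omega_l\}_{l=1}^L$ is a partition.

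For disjointness I would argue by contradiction. Suppose some FBS-channel pair $x\in\Omega$ lies in both $\omega_l$ and $\omega_{l'}$ with $l<l'$. Membership $x\in\omega_l$ means $x$ conflicts with $e(l)$. On the other hand, membership $x\in\omega_{l'}$ requires that $x$ conflict with none of the earlier allocations $\{e(1),\ldots,e(l'-1)\}$; since $l\le l'-1$, the pair $e(l)$ is among these, so $x$ must not conflict with $e(l)$. These two statements contradict each other, so the $\omega_l$'s are mutually exclusive. This part is routine and follows purely from the ``but not with the previous allocations'' clause in the definition.

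The substantive step is covering: every $x\in\Omega$ belongs to some $\omega_l$. Here I would reason about how pairs leave the candidate set $\mathcal{C}=\mathcal{N}\times\mathcal{A}(t)$. Any optimal pair $x\in\Omega$ is a feasible FBS-channel assignment and hence sits in $\mathcal{C}$ at initialization. In iteration $l$, when the greedy rule selects $e(l)=\{i',m'\}$, Lines 5--6 delete from $\mathcal{C}$ exactly the pair $e(l)$ together with $\mathcal{R}(i')\times m'$, and the latter is precisely the set of pairs that conflict with $e(l)$ (same channel $m'$, neighboring FBS in $G_I$). Because the loop runs until $\mathcal{C}=\emptyset$, the pair $x$ must be removed at a unique iteration; let $l(x)$ be that iteration. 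Surviving up to iteration $l(x)$ means $x$ was not deleted earlier, i.e.\ $x$ does not conflict with any of $e(1),\ldots,e(l(x)-1)$, while being deleted at $l(x)$ means $x$ conflicts with $e(l(x))$. Hence $x\in\omega_{l(x)}$, and taking $l(x)$ to be the first index at which $x$ conflicts with a chosen pair makes the assignment well defined.

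I expect the main obstacle to be the boundary case in which an optimal pair coincides with a greedily selected one, $x=e(l)$: under the literal ``conflicts with'' definition a pair does not conflict with itself, so such an $x$ is not captured by the conflict clause. I would resolve this by noting that such an $x$ is still removed from $\mathcal{C}$ at its own selection step (Line 5) without having conflicted with any earlier choice, and adopting the natural convention that a selected optimal pair is assigned to the $\omega_l$ of the step that removes it; with this reading the removal mechanics of the algorithm correspond exactly to the conflict-based membership, and the covering argument goes through. The only remaining thing to check carefully is that each pair is removed at a single iteration, which holds because a deleted pair never re-enters $\mathcal{C}$; this uniqueness is what ties the covering property together with the disjointness established above.
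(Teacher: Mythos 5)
Your proof is correct, and at its core it uses the same two ingredients as the paper's own proof: disjointness is immediate from the ``but not with the previous allocations'' clause in the definition of $\omega_l$, and covering rests on the fact that the greedy loop in Table~\ref{tab:ChanAloc} terminates only when the candidate set $\mathcal{C}$ is empty. The difference is in how covering is argued. The paper proceeds by contradiction: if $\omega_{L+1} = \Omega \setminus (\cup_{l=1}^L \omega_l)$ were nonempty, its elements would conflict with none of $e(1),\cdots,e(L)$, so the greedy algorithm could execute an $(L+1)$-th step, contradicting termination after $L$ steps. You argue the contrapositive directly, assigning each $x \in \Omega$ to the unique iteration $l(x)$ at which it is deleted from $\mathcal{C}$; the two arguments are logically equivalent, so this is essentially the same route. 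Where your version is genuinely more careful is the boundary case $x = e(l)$: under the literal definition a pair does not conflict with itself (conflict requires an edge of $G_I$ between two distinct FBS's), so an optimal pair that the greedy itself selects lies in no $\omega_l$ --- and the paper's contradiction argument also breaks down there, since such a pair cannot be selected again even though it conflicts with nothing previously chosen. In the extreme case where the greedy output is itself optimal, every $\omega_l$ would be literally empty. Your convention of assigning a selected optimal pair to the $\omega_l$ of its own selection step (equivalently, declaring that a pair ``conflicts'' with itself) repairs this for both proofs, and it is consistent with the downstream use of the partition: for $\sigma = e(l)$ one has $\Delta(\pi_{l-1} \cup \sigma, \pi_{l-1}) = \Delta_l$, so Lemma~\ref{lemma:step3} holds with equality and the bounds in Lemma~\ref{lemma:6} and Theorem~\ref{th1:2} are unaffected.
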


\begin{proof}
According to the definition of $\omega_l$, the $L$ subsets of the optimal solution $\Omega$ do not intersect with each other.  Assume the statement is false, then the union of these $L$ subsets is not equal to the optimal set $\Omega$.  Let the {\em set difference} be $\omega_{L+1} = \Omega \setminus (\cup_{l=1}^L \omega_l)$. By definition, $\omega_{L+1}$ does not conflict with the existing $L$ allocations $\{e(1), \cdots, e(L)\}$, meaning that the greedy algorithm can continue to at least the $(L+1)$-th step.  This conflicts with the assumption that the greedy algorithm stops in $L$ steps.  It follows that $\Omega = \cup_{l=1}^L \omega_l$. 
\end{proof}

Let $\Delta(\pi_2,\pi_1)=Q(\pi_2)-Q(\pi_1)$ denote the difference between two feasible allocations $\pi_1$ and $\pi_2$.  We next derive a lower bound on the performance of the greedy algorithm.  We assume two properties for function $\Delta(\pi_2,\pi_1)$ in the following. 

\begin{ppty}
Consider FBS-channel pair sets $\pi_1$, $\pi_2$, and $\sigma$, satisfying $\pi_1\subseteq\pi_2$ and $\sigma \cap \pi_2 =\emptyset$.  We have $\Delta(\pi_2\cup\sigma, \pi_1\cup\sigma) \le \Delta(\pi_2,\pi_1)$. 
\end{ppty}

\begin{ppty}
Consider FBS-channel pair sets $\pi$, $\sigma_1$, and $\sigma_2$ satisfying $\sigma_1 \cap \pi = \emptyset$, $\sigma_2 \cap \pi = \emptyset$, and $\sigma_1 \cap \sigma_2 = \emptyset$.  We have $\Delta(\sigma_1 \cup \sigma_2 \cup \pi, \pi) \le \Delta(\sigma_1 \cup \pi,\pi) + \Delta(\sigma_2 \cup \pi, \pi)$.  
\end{ppty}

In Property 1, we have $\sigma \cap \pi_1 =\emptyset$ since $\pi_1\subseteq\pi_2$ and $\sigma \cap \pi_2 =\emptyset$.  This property states that the incremental objective value does not get larger as more channels are allocated and as the objective value gets larger.  Property 2 states that the incremental objective value achieved by allocating multiple FBS-channel pair sets does not exceed the sum of the incremental objective values achieved by allocating each individual FBS-channel pair set.  These are generally true for many resource allocation problems~\cite{Kelly98}.

Since we choose the maximum incremental allocation in each step of the greedy algorithm, we have Lemma~\ref{lemma:step3} that directly follows Step 3 in Table~\ref{tab:ChanAloc}. 

\begin{lemma} \label{lemma:step3}
For any FBS-channel pair $\sigma \in \omega_l$, we have $Q(\pi_{l-1} \cup \sigma) - Q(\pi_{l-1}) = \Delta(\pi_{l-1} \cup \sigma, \pi_{l-1}) \le \Delta_l$.
\end{lemma}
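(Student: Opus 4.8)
The plan is to derive the inequality directly from the greedy selection rule in Step~3 of Table~\ref{tab:ChanAloc}, so essentially no machinery beyond the definitions is required. The key fact I would establish first is that any single pair $\sigma \in \omega_l$ is still an \emph{eligible} candidate at the moment the greedy algorithm makes its $l$-th selection, i.e.\ $\sigma$ still lies in the candidate set $\mathcal{C}$ immediately before $e(l)$ is chosen. Once that is in hand, the bound is immediate.

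To establish eligibility, recall that $\mathcal{C}$ is pruned only in Steps~5 and~6: after a pair is selected, the algorithm deletes that pair and all pairs $\mathcal{R}(\cdot)\times m'$ that \emph{conflict} with it (a neighboring FBS on the same channel). By the definition of $\omega_l$, the pair $\sigma$ does not conflict with any of the earlier allocations $e(1),\ldots,e(l-1)$, so it was never removed by a Step~6 pruning triggered by those selections. Moreover, $\sigma$ cannot coincide with any earlier $e(j)$ for $j<l$: since $\sigma$ conflicts with $e(l)$ and the greedy algorithm never selects a pair that conflicts with a previously chosen one, the equality $\sigma = e(j)$ would force $e(l)$ to conflict with $e(j)$, a contradiction. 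Hence $\sigma$ survives all of the first $l-1$ iterations and remains available in the $l$-th maximization.

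Having confirmed that $\sigma$ is a legitimate competitor in the $l$-th selection, I would then invoke the greedy rule. Because $e(l)$ is chosen to maximize the incremental gain $Q(\pi_{l-1}\cup\,\cdot\,)-Q(\pi_{l-1})$ over all pairs remaining in $\mathcal{C}$, and $\sigma$ is one such pair, we obtain
\begin{equation}
\Delta(\pi_{l-1}\cup\sigma,\pi_{l-1}) = Q(\pi_{l-1}\cup\sigma)-Q(\pi_{l-1}) \le Q(\pi_{l-1}\cup e(l))-Q(\pi_{l-1}) = \Delta_l,
\end{equation}
which is precisely the claimed bound.

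The argument is essentially bookkeeping, and the only step that requires genuine care is the eligibility claim, i.e.\ verifying that an element of $\omega_l$ has not already been purged from $\mathcal{C}$ before the $l$-th iteration; everything else follows mechanically from the definition of $\Delta_l$ and the maximality of the greedy choice. Note that Properties~1 and~2 play no role here; I expect they enter only in the subsequent step that aggregates these per-step inequalities (together with the partition of $\Omega$ from Lemma~\ref{lemma1:5}) into the global performance lower bound on $Q(\pi_L)$.
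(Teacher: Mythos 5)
Your proof is correct and takes essentially the same route as the paper, which justifies this lemma in a single line as a direct consequence of the greedy maximization in Step 3 of Table~\ref{tab:ChanAloc}. The only difference is that you spell out the eligibility check — that any $\sigma \in \omega_l$ survives in the candidate set $\mathcal{C}$ through the first $l-1$ iterations — a detail the paper leaves implicit; this is careful bookkeeping on the same argument, not a different approach.
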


\begin{lemma} \label{lemma:6}
Assume the greedy algorithm stops in $L$ steps, we have 
$$
  Q(\Omega)\le Q(\pi_L)+\sum_{l=1}^L \sum_{\sigma\in \omega_{l}} \Delta(\sigma \cup \pi_{l-1}, \pi_{l-1}).
$$
\end{lemma}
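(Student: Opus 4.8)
The plan is to bound the optimal objective $Q(\Omega)$ by decomposing the optimal solution $\Omega$ along the partition established in Lemma~\ref{lemma1:5}, and then applying the two structural properties of $\Delta$. First I would write $Q(\Omega)$ as $Q(\pi_L \cup \Omega)$ plus the gap between $Q(\Omega)$ and $Q(\pi_L \cup \Omega)$; more naturally, since $\pi_L$ is the greedy solution and $\Omega$ the optimum, I would express $Q(\Omega) \le Q(\Omega \cup \pi_L) = Q(\pi_L) + \Delta(\Omega \cup \pi_L, \pi_L)$, using that $\Delta \ge 0$ when adding allocations (monotonicity). This reduces the task to showing that $\Delta(\Omega \cup \pi_L, \pi_L) \le \sum_{l=1}^L \sum_{\sigma \in \omega_l} \Delta(\sigma \cup \pi_{l-1}, \pi_{l-1})$.

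Next I would invoke the partition $\Omega = \cup_{l=1}^L \omega_l$ from Lemma~\ref{lemma1:5}, which lets me write $\Omega \cup \pi_L = \pi_L \cup (\cup_{l=1}^L \omega_l)$. Applying Property~2 repeatedly (the subadditivity of the incremental value over disjoint allocation sets), I would break the single increment $\Delta(\Omega \cup \pi_L, \pi_L)$ into a sum of per-subset increments of the form $\Delta(\omega_l \cup \pi_L, \pi_L)$. The disjointness hypotheses needed for Property~2 hold because the $\omega_l$'s are pairwise disjoint and each is disjoint from $\pi_L$ (the greedy-chosen pairs are removed from the candidate set once selected, and the $\omega_l$'s are subsets of the optimal solution defined precisely by their conflict pattern).

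Then the key move is to bound each $\Delta(\omega_l \cup \pi_L, \pi_L)$ by comparing against the smaller base set $\pi_{l-1} \subseteq \pi_L$. Here Property~1 applies: with $\pi_1 \leftarrow \pi_{l-1}$, $\pi_2 \leftarrow \pi_L$, and $\sigma \leftarrow \omega_l$, it gives $\Delta(\omega_l \cup \pi_L, \pi_L) \le \Delta(\omega_l \cup \pi_{l-1}, \pi_{l-1})$, i.e., the incremental gain of adding $\omega_l$ shrinks as the base grows from $\pi_{l-1}$ to $\pi_L$. Finally, another application of Property~2 splits $\Delta(\omega_l \cup \pi_{l-1}, \pi_{l-1})$ into $\sum_{\sigma \in \omega_l} \Delta(\sigma \cup \pi_{l-1}, \pi_{l-1})$, since $\omega_l$ is a disjoint union of single pairs $\sigma$, each disjoint from $\pi_{l-1}$. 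Chaining these inequalities yields exactly the claimed bound.

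The main obstacle I anticipate is verifying the disjointness preconditions carefully at each application of Properties~1 and~2, particularly that each $\omega_l$ is disjoint from $\pi_L$ (not just from $\pi_{l-1}$) and that the elements of $\omega_l$ do not overlap $\pi_{l-1}$ -- these follow from the definitions of $\omega_l$ as the newly-conflicting part of $\Omega$ and from the greedy removal procedure, but they must be stated explicitly to justify the inductive telescoping. A secondary subtlety is ensuring that the repeated use of Property~2 over the $L$ disjoint subsets is valid as a single inequality; this requires that Property~2 extend by an easy induction from two disjoint sets to finitely many, which I would note briefly rather than belabor.
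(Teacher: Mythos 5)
Your overall strategy mirrors the paper's: both arguments rest on the partition $\Omega=\cup_{l=1}^L\omega_l$ from Lemma~\ref{lemma1:5}, use Property~1 in its ``diminishing returns'' form to shrink the base set down to $\pi_{l-1}$, and use Property~2 to split each $\omega_l$ into singletons. Your rearrangement of Property~1 into $\Delta(\omega_l\cup\pi_L,\pi_L)\le\Delta(\omega_l\cup\pi_{l-1},\pi_{l-1})$ is algebraically valid, and the disjointness facts you worry about do hold (no pair of $\pi_L$ conflicts with $e(l)$, since the greedy algorithm removes conflicting candidates, hence $\omega_l\cap\pi_L=\emptyset$). The genuine problem is at your very first step, and it propagates through the next two: the quantities $Q(\Omega\cup\pi_L)$ and $\Delta(\omega_l\cup\pi_L,\pi_L)$ evaluate $Q$ at allocations that violate the interference constraint. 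By the very definition of $\omega_l$, every $\sigma\in\omega_l$ conflicts with $e(l)\in\pi_L$; hence $\omega_l\cup\pi_L$ (and a fortiori $\Omega\cup\pi_L$) assigns the same channel to two interfering FBS's. But $Q(\bm{c})$ is defined as the objective value of problem (\ref{eq:ProbOptM}) for a given channel allocation, and that problem contains the constraint $\sum_{i=1}^N d_i^k c_{i,m}\le 1$; so $Q$ is not defined at these conflicting allocations, and neither the monotonicity you invoke nor Properties~1--2 can be applied there without additional assumptions.

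The paper's proof is organized precisely to avoid this. It telescopes in the order
$Q\bigl((\cup_{i=l+1}^L\omega_i)\cup\pi_l\bigr)\le Q\bigl((\cup_{i=l+2}^L\omega_i)\cup\pi_{l+1}\bigr)+\sum_{\sigma\in\omega_{l+1}}\Delta(\sigma\cup\pi_l,\pi_l)$,
and every set appearing in this chain is conflict-free: pairs in $\omega_i$ with $i\ge l+1$ do not, by definition, conflict with $e(1),\dots,e(l)$; pairs within $\Omega$ are mutually compatible; and pairs within $\pi_l$ are mutually compatible because of the greedy removal step. In other words, the induction peels off $\omega_{l+1}$ while simultaneously swapping $e(l+1)$ into the base, so that $Q$ is only ever applied to feasible allocations. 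To repair your argument you would either need to restructure it in this interleaved way, or explicitly extend $Q$ to arbitrary pair sets (e.g., by defining $G_i=\sum_m c_{i,m}P_m^A$ regardless of conflicts) and then assume monotonicity and Properties~1--2 hold for that extension---a strictly stronger set of assumptions than the paper's proof requires.
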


\begin{proof}
The following inequalities hold true according to the properties of the $\Delta(\cdot, \cdot)$ function: 
\begin{eqnarray}\label{eq:IneqProof}
%&&\hspace{-0.2in} 
Q((\cup_{i=l+1}^L \omega_i)\cup \pi_l) 
&=& Q((\cup_{i=l+2}^L \omega_i)\cup \pi_l) + 
%\nonumber\\
%&&\hspace{0.6in} 
\Delta((\cup_{i=l+1}^L \omega_i)\cup \pi_l,(\cup_{i=l+2}^L \omega_i)\cup \pi_l) \nonumber \\
%&&\hspace{-0.35in} 
&& \le Q((\cup_{i=l+2}^L \omega_i)\cup \pi_l)+\Delta(\omega_{l+1}\cup\pi_l,\pi_l) 
\nonumber \\
%&&\hspace{-0.35in} 
&&\le Q((\cup_{i=l+2}^L \omega_i)\cup \pi_{l+1})+\Delta(\omega_{l+1}\cup\pi_l,\pi_l) \nonumber \\
%&&\hspace{-0.35in} 
&& \le Q((\cup_{i=l+2}^L \omega_i)\cup \pi_{l+1}) + \sum_{\sigma\in\omega_{l+1}} \Delta(\sigma \cup \pi_l,\pi_l). \nonumber
\end{eqnarray}
We have $\pi_0=\emptyset$ and $\omega_{L+1}=\emptyset$ (see Lemma~\ref{lemma1:5}).  With induction from $l=0$ to $l=L-1$, 
we have $Q((\cup_{i=1}^L \omega_i) \cup \emptyset) = Q(\Omega)$ 
and $Q(\Omega) \le Q(\pi_{L}) + \sum_{l=1}^L \sum_{\sigma\in\omega_{l}} \Delta(\sigma\cup\pi_{l-1},\pi_{l-1})$.  
\end{proof}

\begin{lemma} \label{lemma:7}
The maximum size of $\omega_l$ is equal to the degree, in the interference graph $G_I$, of the FBS selected in the $l$-th step of the greedy algorithm, which is denoted as $D(l)$.
\end{lemma}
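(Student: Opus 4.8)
The plan is to read off the structure of every member of $\omega_l$ directly from the definition of conflict, and then count these members using the neighborhood of the selected FBS in the interference graph. Throughout, write $e(l)=\{i_l,m_l\}$ for the FBS-channel pair chosen at the $l$-th greedy step, where $i_l$ is the selected FBS and $m_l$ is the allocated channel.

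First I would use the definition of conflict to pin down the form of each element of $\omega_l$. By construction every $\sigma\in\omega_l$ conflicts with $e(l)$, and by the definition of conflict this forces the FBS appearing in $\sigma$ to be joined to $i_l$ by an edge of $G_I$ and to use the \emph{same} channel as $e(l)$. Hence every $\sigma\in\omega_l$ has the form $\{i',m_l\}$ with $i'\in\mathcal{R}(i_l)$; that is, all members of $\omega_l$ share the single channel $m_l$ and draw their FBS from the neighborhood $\mathcal{R}(i_l)$ of $i_l$.

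Next I would bound the cardinality by an injection. Because an FBS-channel pair is uniquely determined by its FBS together with its channel, and every member of $\omega_l$ carries the same channel $m_l$, any two distinct members must differ in their FBS. The map $\{i',m_l\}\mapsto i'$ is therefore injective from $\omega_l$ into $\mathcal{R}(i_l)$, which yields $|\omega_l|\le|\mathcal{R}(i_l)|=D(l)$. For tightness I would exhibit a configuration attaining this count: let the global optimum $\Omega$ assign channel $m_l$ to all $D(l)$ neighbors of $i_l$---feasible exactly when those neighbors are pairwise non-adjacent in $G_I$, so that no interference constraint of problem~(\ref{eq:ProbOptM}) is violated---and suppose none of these pairs conflicts with the earlier selections $e(1),\cdots,e(l-1)$. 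Then all $D(l)$ pairs $\{i',m_l\}$ belong to $\omega_l$, so $|\omega_l|=D(l)$.

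The only step needing real care is the injectivity in the counting argument, and it becomes transparent once one observes that the channel is held constant across $\omega_l$, so distinct pairs are forced onto distinct FBS's; the rest is bookkeeping. The achievability direction is mild: it only requires selecting a scenario in which the neighbors of $i_l$ form an independent set of $G_I$, so that feasibility of $\Omega$ permits all of them to share $m_l$. I expect the upper-bound half $|\omega_l|\le D(l)$ to be the part actually invoked downstream in the performance lower bound, since it is what lets one replace the inner sum in Lemma~\ref{lemma:6} by a sum of at most $D(l)$ terms per step.
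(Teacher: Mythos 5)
Your proof is correct and takes essentially the same route as the paper's: every pair conflicting with $e(l)=\{i_l,m_l\}$ must pair a neighbor of $i_l$ in $G_I$ with the very same channel $m_l$, so $\omega_l$ injects into $\mathcal{R}(i_l)$ and $|\omega_l|\le D(l)$. You simply make the paper's two-sentence sketch rigorous (explicit structure of the conflicting pairs plus the injectivity count) and add the achievability construction, which the paper's proof asserts implicitly but does not spell out; as you note, only the upper-bound half is actually used in Theorem~\ref{th1:2}.
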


\begin{proof} 
Once FBS $i$ is allocated with channel $m$, the neighboring FBS's in $G_I$, $\mathcal{R}(i)$,  cannot use the same channel $m$ anymore due to the interference constraint. The maximum number of FBS-channel pairs that conflict with the selected FBS-channel pair $\{i,m\}$, i.e., the maximum size of $\omega_l$, is equal to the degree of FBS $i$ in $G_I$. 
\end{proof}

Then we have Theorem~\ref{th1:2} that provides a lower bound on the objective value achieved by the greedy algorithm given in Table~\ref{tab:ChanAloc}. 

\begin{theorem} \label{th1:2}
The greedy algorithm can achieve an objective value that is at least $\frac{1}{1+D_{max}}$ of the global optimum, where $D_{max}$ is the maximum node degree in the interference graph $G_I$ of the femtocell CR network. 
\end{theorem}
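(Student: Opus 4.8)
The plan is to obtain the bound by chaining the three lemmas already established together with the telescoping identity $\sum_{l=1}^L \Delta_l = Q(\pi_L)$. The natural starting point is the upper bound on the optimum furnished by Lemma~\ref{lemma:6},
\[
Q(\Omega) \le Q(\pi_L) + \sum_{l=1}^L \sum_{\sigma \in \omega_l} \Delta(\sigma \cup \pi_{l-1}, \pi_{l-1}),
\]
which has already separated out the greedy value $Q(\pi_L)$ and expressed the remaining gap as a double sum of the incremental contributions of the optimal pairs, grouped by the greedy step $l$ at which each first conflicts with the greedy choice.

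First I would bound each inner term. By Lemma~\ref{lemma:step3}, every $\sigma \in \omega_l$ satisfies $\Delta(\sigma \cup \pi_{l-1}, \pi_{l-1}) \le \Delta_l$, because $\Delta_l$ is, by construction of the greedy rule, the largest incremental gain available at step $l$. Hence the inner sum is at most $|\omega_l|\,\Delta_l$. Next I would control the number of terms using Lemma~\ref{lemma:7}, which gives $|\omega_l| \le D(l) \le D_{max}$, where $D(l)$ is the degree of the FBS chosen in step $l$ and $D_{max}$ is the maximum degree in $G_I$. Substituting both bounds yields
\[
Q(\Omega) \le Q(\pi_L) + D_{max} \sum_{l=1}^L \Delta_l = Q(\pi_L) + D_{max}\,Q(\pi_L) = (1 + D_{max})\,Q(\pi_L),
\]
where the middle equality is the telescoping identity. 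Rearranging gives $Q(\pi_L) \ge \frac{1}{1+D_{max}} Q(\Omega)$, which is precisely the claimed performance guarantee.

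The main obstacle is not this final assembly, which is essentially a single substitution, but the structural groundwork already carried out beforehand: showing that the optimal set partitions cleanly into the conflict classes $\omega_l$ (Lemma~\ref{lemma1:5}), and that the per-pair comparison against the greedy increment survives the step-by-step swap in Lemma~\ref{lemma:6}, which in turn rests on Properties~1 and~2 (the diminishing-returns and subadditivity conditions on $\Delta(\cdot,\cdot)$). If I were proving the theorem from scratch, I would expect those facts—especially verifying the submodular-type inequalities used inside the induction of Lemma~\ref{lemma:6}—to be the real difficulty. Given them, the factor $\frac{1}{1+D_{max}}$ emerges immediately from the accounting observation that each greedy step can be ``charged'' for at most $D_{max}$ optimal pairs, plus itself.
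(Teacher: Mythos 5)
Your proposal is correct and follows essentially the same route as the paper's proof: chaining Lemma~\ref{lemma:6}, Lemma~\ref{lemma:step3}, and Lemma~\ref{lemma:7} with the telescoping identity $\sum_{l=1}^L \Delta_l = Q(\pi_L)$. The only cosmetic difference is that the paper first keeps the per-step degrees $D(l)$ to state a tighter intermediate bound $(1+\bar{D})Q(\pi_L)$ with $\bar{D}$ a $\Delta_l$-weighted average degree, and only then relaxes $\bar{D} \le D_{max}$, whereas you substitute $D_{max}$ directly.
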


\begin{proof} 
According to Lemmas~\ref{lemma:6} and~\ref{lemma:7}, we have: 
\begin{eqnarray}\label{eq:OptBound}
&& \hspace{0.0in} Q(\Omega) \le Q(\pi_L) + \sum_{l=1}^L D(l)\Delta_l  
= Q(\pi_L)+\bar{D} \sum_{l=1}^L \Delta_l \nonumber \\
&& \hspace{0.375in} 
= (1 + \bar{D}) Q(\pi_L), 
\end{eqnarray}
where $\bar{D}=\sum_{l=1}^L D(l)\Delta_l / \sum_{l=1}^L \Delta_l$. 
The second equality is due to the facts that 
$\sum_{l=1}^L \Delta_l = Q(\pi_L)$.  

To further simplify the bound, we replace $D(l)$ with the maximum node degree $D_{max}$. 
We then have $\bar{D} \leq \sum_{l=1}^L D_{max} \Delta_l / \sum_{l=1}^L \Delta_l = D_{max}$ and  
\begin{equation} \label{eq:lowerbd}
  \frac{1}{1 + D_{max}} Q(\Omega) \le Q(\pi_L) \le Q(\Omega), 
\end{equation}
which provides a lower bound on the performance of the greedy algorithm. 
\end{proof}

When there is a single FBS in the CR network, we have $D_{max}=0$ and $Q(\pi_L) = Q(\Omega)$ according to Theorem~\ref{th1:2}. The proposed algorithm produces the optimal solution.  In the case of multiple non-interfering FBS's, we still have $D_{max}=0$ and can obtain the optimal solution using the proposed algorithm.  For the femtocell CR network given in Fig.~\ref{fig:netmod2} (with interference graph shown in Fig.~\ref{fig:interferencegraph}), we have $D_{max}=1$ and the low bound is a half of the global optimal. Note that (\ref{eq:OptBound}) provides a tighter bound for the optimum than (\ref{eq:lowerbd}), but with higher complexity. These are interesting performance bounds since they bound the achievable video quality, an application layer performance measure, rather than lower layer metrics (e.g., bandwidth or time share).

%%%%%%%%%%%%%%%%%%%%%%%%%%%%%%%%%%%%%%%%%%%%%%%%%%%%%%%%%%%%%%%%%
\subsection{Simulation Results \label{sec:sim4}}

We evaluate the performance of the proposed algorithms using MATLAB and JSVM 9.13 Video codec. Two scenarios are used in the simulations: a single FBS CR network and a CR network with interfering FBS's. In every simulation, we compare the proposed algorithms with the following three more straightforward heuristic schemes:
\begin{itemize}
       \item Heuristic 1 based on {\em equal allocation}: each CR user chooses the better channel (i.e., the common channel or a licensed channel) based on the channel conditions; time slots are equally allocated among active CR users;
       \item Heuristic 2 exploiting {\em multiuser diversity}: the MBS and each FBS chooses one active CR user with the best channel condition; the entire time slot is allocated to the selected CR user.
       \item {\em SCA-MAC} proposed in~\cite{Hsu07}: with this scheme, the successful transmission rate is evaluated based on channel packet loss rate and collision probability with primary users; the channel-user pair with the highest transmission probability is selected.
\end{itemize}
We choose SCA-MAC because it adopts similar models and assumptions as in this paper. Once the channels are selected, the same distributed algorithm is used for scheduling video data for all the three schemes.

We adopt the Raleigh block fading model and the packet loss probability is between [0.004, 0.028]. The frame rate is set to 30 fps and the GoP size is 16. The base layer mode is set to be AVC compatible. The motion search mode is set to Fast Search with search range 32. 
%The JSVM configuration files are given in the Appendix. 
Each point in the figures presented in this section is the average of 10 simulation runs with different random seeds. We plot 95\% confidence intervals in the figures, which are generally negligible.

%%%%%%%%%%%%%%%%%%%%%%%%%%%%%%%%%%%%%%%%%%%%%%%%%%%%%%%%%%%%%%%%%
\subsubsection{Case of Single FBS}

In the first scenario, there are $M=8$ channels and the channel parameters $P_{01}^m$ and $P_{10}^m$ are set to 0.4 and 0.3, respectively, for all $m$.  The maximum allowable collision probability $\gamma_m$ is set to 0.2 for all $m$. There is one FBS and three active CR users.  Three Common Intermediate Format (CIF, 352$\times$288) video sequences are streamed to the CR users, i.e., {\em Bus} to CR user 1, {\em Mobile} to CR user 2, and {\em Harbor} to CR user 3. 
We have $T=10$ as the delivery deadline.  Both probabilities of false alarm $\epsilon$ and miss detection $\delta$ are set to 0.3 for all the FBS's and CR users, unless otherwise specified.

First we investigate the convergence of the distributed algorithm. The traces of the two dual variables are plotted in Fig.~\ref{fig:conv}. 
To improve the convergence speed, the correlation in adjacent time slots can be exploited. In particular, we set the optimal values for the optimization variables in the previous time slot as the initialization values for the variables in the current time slot. By doing so, the convergence speed can be improved. 
It can be seen that both dual variables converge to their optimal values after 300 iterations. After convergence, the optimal solution for the primary problem can be obtained.

\begin{figure}[!t]
\centering
\includegraphics[width=4.5in, height=3.0in]{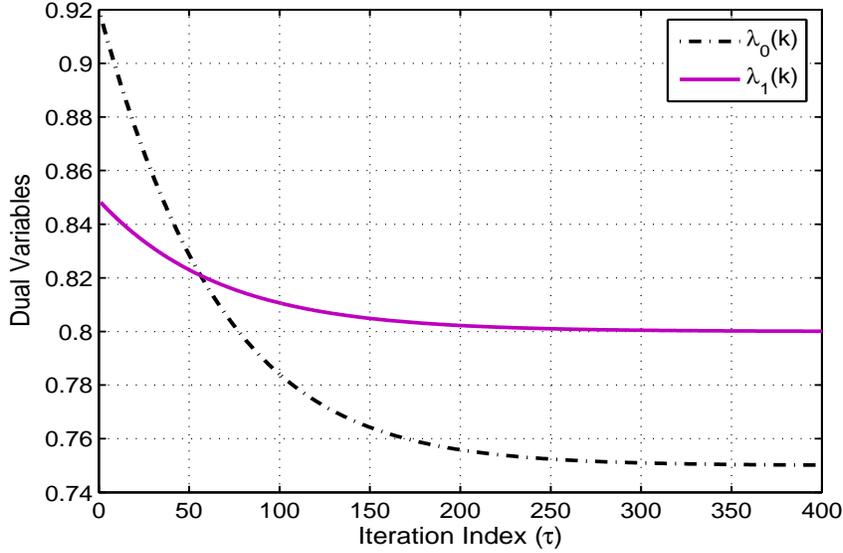}
\caption{Convergence of the two dual variables in the single FBS case.}
\label{fig:conv}
%\vspace{-0.15in}
\end{figure}

Our proposed scheme achieves the best performance among the three algorithms, with up to 4.3 dB improvement over the two heuristic schemes and up to 2.5 dB over SCA-MAC. Such gains are significant with regard to video quality, since a 0.5 dB difference is distinguishable by human eyes. Compared to the two heuristic schemes and SCA-MAC, the video quality of our proposed scheme is well balanced among the three users, indicating better fairness performance.

%\begin{figure}
%\centering
%%\includegraphics[width=3.4in, height=2.5in]{Fairness_plot2.eps}
%\includegraphics[width=3.2in, height=2.1in]{femto_cr_video/Fairness_plot2.eps}
%\caption{Single FBS: received video quality of the three CR users.}
%\label{fig:fairness}
%\end{figure}

In Fig.~\ref{fig:singleFBSmAll}, we examine the impact of the number of channels $M$ on received video quality. 
%We increase $M$ from 4 to 12 with step size 2, and plot the received video quality average PSNR values of the three received videos. 
%We examine the impact of the number of channels M on received video quality. 
First, we validate the video quality measure used in our formulation by comparing the PSNR value computed using (\ref{eq:QuaMod}) with that computed from real decoded video frames.  
%We compute the real PSNRs of received video at the three CR users and plot the average of three curves in the figure.  
The average PSNR for three received videos are plotted in the figure. It can be seen that the real PSNRs are very close to those predicted by (\ref{eq:QuaMod}), with overlapping confidence intervals. This is also consistent with the results shown in Fig.~\ref{fig:mgs-rd}. 
%and fall within confidence intervals of computed PSNRs. 
%The largest gap between the two curves is 0.035 dB, which is negligible. Thus, the rate-distortion model (\ref{eq:QuaMod}) used in our problem formulation is sufficient to predict the real PSNR. 
Second, as expected, the more licensed channels, the more spectrum opportunities for CR users and the higher PSNR for received videos. 
%Both curves of Heuristics 1 and 2 have lower slops than that of the proposed algorithm, implying that the proposed scheme is more efficient in exploiting the additional spectrum opportunities for video transmissions. 
SCA-MAC performs better than two heuristics, but is inferior to the proposed scheme.

%\begin{figure}
%\centering
%\includegraphics[width=3.2in, height=2.1in]{femto_cr_video/SingleFBS_ChanM_Plot.eps}
%\caption{Single FBS: received video quality vs. number of channels.}
%\label{fig:singleFBSm}
%\end{figure}

\begin{figure}[!t]
\centering
\includegraphics[width=4.5in, height=3.0in]{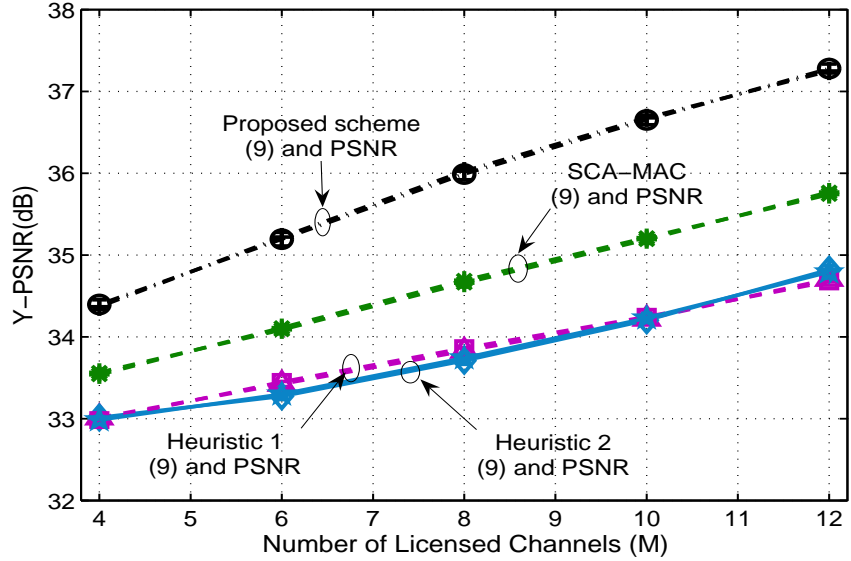}
\caption{Single FBS: received video quality vs. number of channels (computed with (9) and measured by PSNR).}
\label{fig:singleFBSmAll}
\end{figure}

We also plot the 
%Multi-scale Structural Similarity Index (MS-SSIM) 
MS-SSIM of the received videos at the three CR users in Fig.~\ref{fig:singleFBSmSSIM}~\cite{Wang04}.  
%As expected, the more licensed channel, the more spectrum opportunities for CR users and the higher MS-SSIM for received video sequences. The curves of four schemes are very similar with those in the fig. (to be added) above.  
Similar observations can be made from the MS-SSIM plot. All MS-SSIMs for the four curves are more than 0.97 and very close to 1. The proposed scheme still outperforms the other three schemes. In the remaining figures, we will use model predicted PSNR values, since the model (\ref{eq:QuaMod}) is sufficient to predict the real video quality. 

\begin{figure}[!t]
\centering
\includegraphics[width=4.5in, height=3.0in]{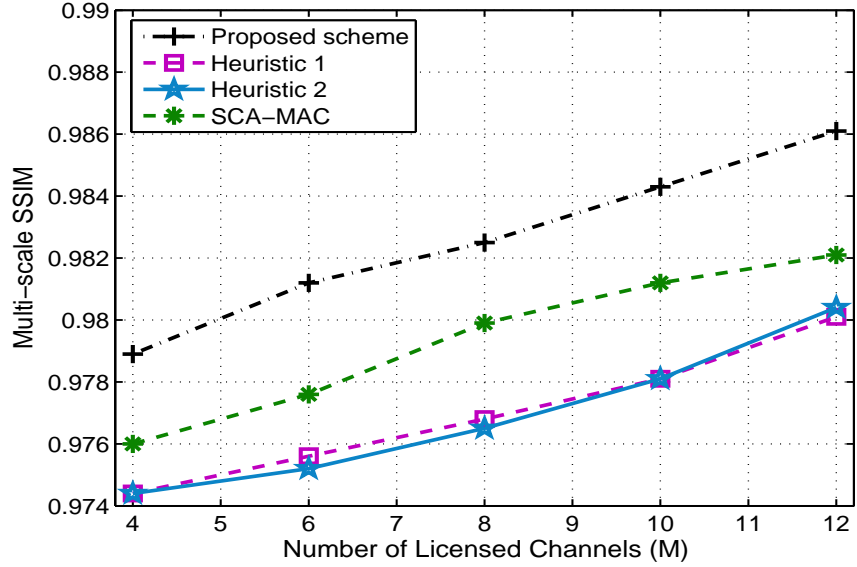}
\caption{Single FBS: received video quality vs. number of channels (measured by MS-SSIM).}
\label{fig:singleFBSmSSIM}
\end{figure}

In Fig.~\ref{fig:singleFBSeta}, we demonstrate the impact of channel utilization $\eta$ on received video quality. The average PSNRs achieved by the four schemes are plotted when $\eta$ is increased from 0.3 to 0.7. Intuitively, a smaller $\eta$ allows more spectrum opportunities for video transmission. This is illustrated in the figure where all the three curves decrease as $\eta$ gets larger. The performance of both heuristics are close and the proposed scheme achieves a gain about 3 dB over the heuristics and 2 dB over SCA-MAC.
%\footnote{For streaming videos, a 0.5 dB improvement is observable and worth pursuing~\cite{Girod}.}

\begin{figure}[!t]
\centering
\includegraphics[width=4.5in, height=3.0in]{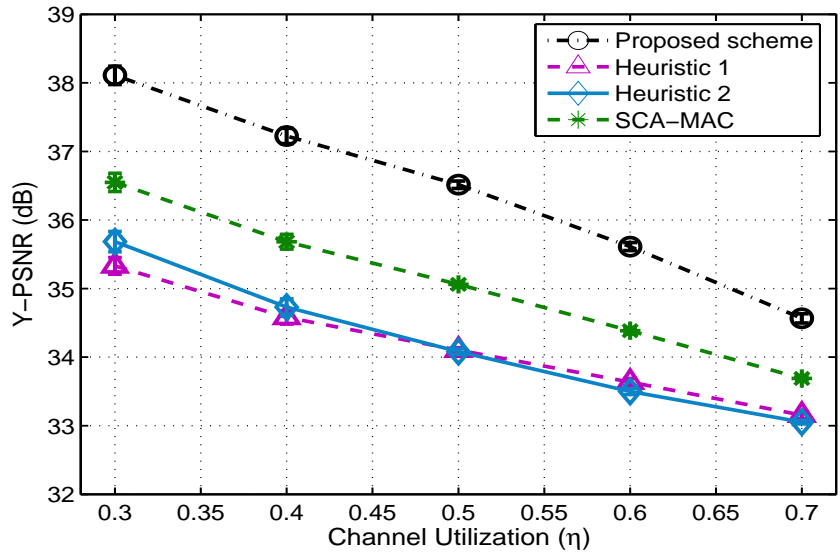}
\caption{Single FBS: received video quality vs. channel utilization.}
\label{fig:singleFBSeta}
\end{figure}

We also compare the MGS and FGS videos while keeping other parameters identical. We find that MGS video achieves over 0.5 dB gain in video quality over FGS video. The results are omitted for brevity. 

%In this paper, the focus is to exploit scalable video coding in dynamic CR networks, while a specific video codec is not that important. For example, the MPEG-4 FGS can also be used in the framework. We choose MGS in this paper because MGS can achieve a better rate-distortion performance over MPEG-4 FGS~\cite{Wien07}. In Fig.~\ref{fig:mgsfgs}, we compare the MGS and FGS videos while keeping other parameters identical. It can be seen that MGS video achieves over 0.5 dB gain in video quality over FGS video. 

%In the original version, we omitted the above discussion due to space limit. We have added the above clarification in the revised paper. In addition, we have added a comparison of MGS and FGS videos in the revision. The new results are also given in the following figure. 

%\begin{figure}
%\centering
%%\includegraphics[width=3.4in, height=2.5in]{SingleFBS_ChanEta_Plot_wFGS.eps}
%\includegraphics[width=3.2in, height=2.1in]{femto_cr_video/SingleFBS_ChanEta_Plot_wFGS.eps}
%\caption{Single FBS: comparison of H.264 MGS and MPEG-4 FGS videos.}
%\label{fig:mgsfgs}
%\end{figure}

%%%%%%%%%%%%%%%%%%%%%%%%%%%%%%%%%%%%%%%%%%%%%%%%%%%%%%%%%%%%%%%%%
\subsubsection{Case of Interfering FBS's}

We next investigate the second scenario with three FBS's, and each FBS has three active CR users. Each FBS streams three different videos to the corresponding CR users. The coverages of FBS 1 and 2 overlap with each other, and the coverages of FBS 2 and 3 overlap with each other. 
%; however the coverages of FBS 1 and 3 do not overlap.

In Fig.~\ref{fig:multiFBSm}, we examine the impact of the number of channels $M$ on the received video quality. The average PSNRs of all the active CR users are plotted in the figure when we increase $M$ from 12 to 20 with step size 2. As mentioned before, more channels imply more transmission opportunities for video transmission. In this scenario, heuristic 2 (with a multiuser diversity approach) outperforms heuristic 1 (with an equal allocation approach).  But its PSNRs are still about 0.3 $\sim$ 0.5 dB lower that those of the proposed algorithm. The proposed scheme has up to 0.4 dB improvement over SCA-MAC. In Fig.~\ref{fig:multiFBSm}, we also plot an upper bound on the optimal objective value, which is obtained as in (\ref{eq:OptBound}). It can be seen that the performance of our proposed scheme is close to optimal solution since the gap between the upper bound and our scheme is generally small (about 0.5 dB).

\begin{figure}[!t]
\centering
\includegraphics[width=4.5in, height=3.0in]{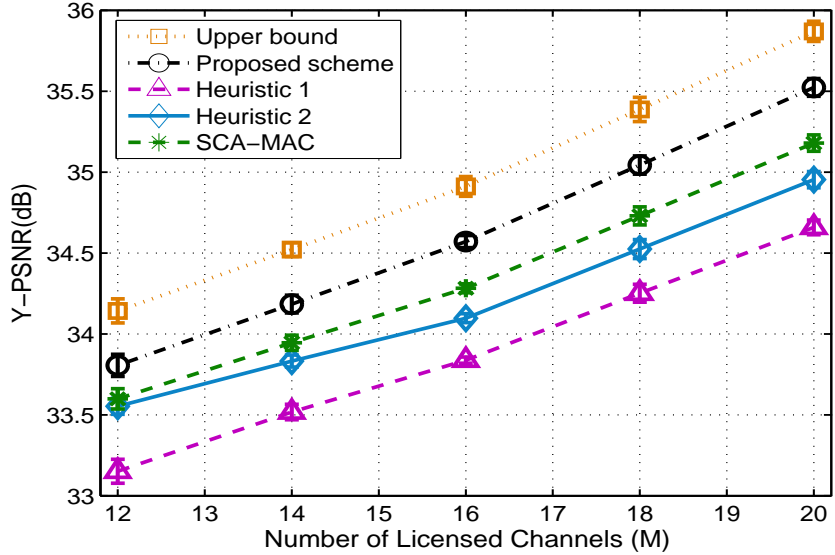}
\caption{Interfering FBS's: received video quality vs. number of channels.}
\label{fig:multiFBSm}
\end{figure}

%In Fig.~\ref{fig:multiFBSeta}, we investigate the impact of channel utilization $\eta$ on the video quality. We increase $\eta$ from 0.3 to 0.7 with step size 0.1, and plot the average PSNRs of all CR users achieved by the three schemes. A lower channel utilization provides more spectrum opportunities. The performance of the proposed scheme is the best among the three schemes. The reason why heuristic 2 achieves better performance than heuristic 1 is that the resource allocation decision is made globally in heuristic 2 but locally in heuristic 1. In heuristic 1, each CR user chooses a channel mode by itself regardless of other CR users.  In heuristic 2, the allocation is made by the MBS and FBS's and the CR users with the best channel conditions are chosen. SCA-MAC outperforms heuristic 2 when channel utilization is below 0.6. We also plot an upper bound on the optimal objective value, which is obtained as in (\ref{eq:OptBound}). Note that this bound is tighter than that given in Theorem~\ref{th:2}.  We observe that the upper bound is about 0.4 dB higher than our proposed scheme, indicating that the gap between the results produced by the proposed scheme and the global optimal is less than 0.4 dB.

%\begin{figure}
%\centering
%\includegraphics[width=3.2in, height=2.1in]{femto_cr_video/MultiFBS_ChanEta_Plot.eps}
%\caption{Interfering FBS's: received video quality vs. channel utilization.}
%\label{fig:multiFBSeta}
%\end{figure}

Next, we examine the impact of sensing errors on the received video quality.  In Fig.~\ref{fig:multiFBSepsilon}, we test five pairs of $\{\epsilon, \delta\}$ values: \{0.2,0.48\}, \{0.24,0.38\}, \{0.3,0.3\}, \{0.38,0.24\}, and \{0.48,0.2\}. 
%The average PSNRs of all active CR users are plotted. 
It is interesting to see that the performance of all the four schemes get worse when the probability of one of the two sensing errors gets large.  We can trade-off between false alarm and miss detection probabilities to find the optimal operating point for the spectrum sensors. 
%On the other hand, 
Moreover, the dynamic range of video quality is not big for the range of sensing errors simulated, compared to that in Fig.~\ref{fig:multiFBSm}. This is because both sensing errors are modeled and treated in the algorithms. Again, our proposed scheme outperforms the two heuristic schemes and SCA-MAC with considerable margins for the entire range.

\begin{figure}[!t]
\centering
\includegraphics[width=4.5in, height=3.0in]{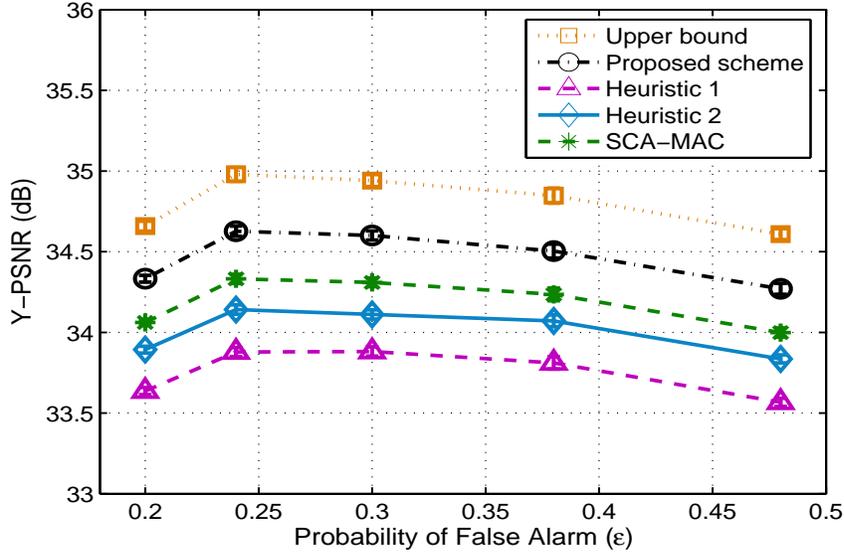}
\caption{Interfering FBS's: received video quality vs. sensing error probability.}
\label{fig:multiFBSepsilon}
\end{figure}

We also investigate the impact of the bandwidth of the common channel $B_0$. In this simulation, we fix $B_1$ at 0.3 Mbps and increase $B_0$ from 0.1 Mbps to 0.5 Mbps with step size 0.1 Mbps. The results are presented in Fig.~\ref{fig:multiFBSCtrlBW}. We notice that the average video quality increases rapidly as the common channel bandwidth is increased from 0.1 Mbps to 0.3 Mbps. Beyond 0.3 Mbps, the increases of the PSNR curves slow down and the curves get flat. This implies that a very large bandwidth for the common channel is not necessary, since the gain for additional bandwidth diminishes as $B_0$ gets large. Again, the proposed scheme outperforms the other three schemes and the gap between our scheme and the upper bound is small.

\begin{figure}[!t]
\centering
\includegraphics[width=4.5in, height=3.0in]{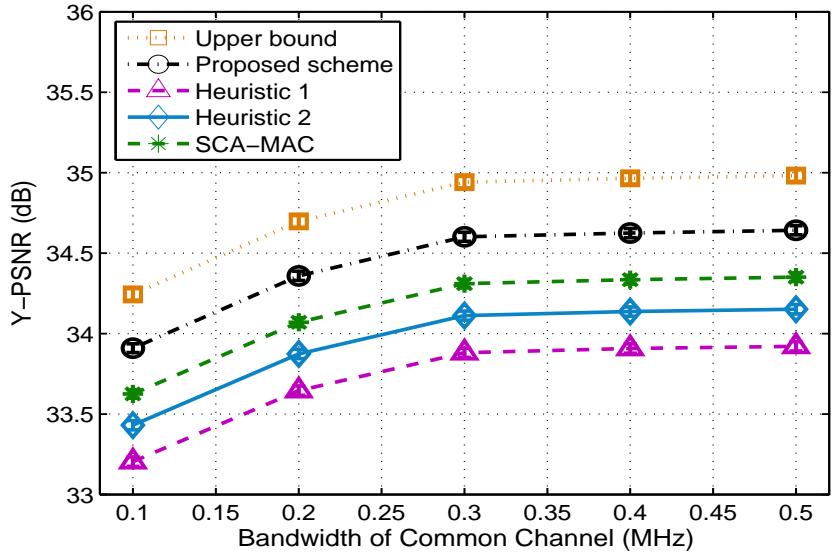}
\caption{Interfering FBS's: received video quality vs. bandwidth of the common channel.}
\label{fig:multiFBSCtrlBW}
\end{figure}

Next, we stop the distributed algorithm after a fixed amount of time, 
%(likely before convergence), 
and evaluate the suboptimal solutions. In particular, we vary the duration of time slots, and let the distributed algorithm run for 5\% of the time slot duration at the beginning of the time slot. Then the solution obtained this way will be used for the video data transmissions. The results are presented in Fig.~\ref{fig:timedura}. 
%We increase the time slot duration as 5 ms, 10 ms, 20 ms and 50 ms. 
%For each time slot duration, we let the distributed algorithm run for 5\% of the duration. 
%The average video Y-PSNRs achieved this way are plotted in the figure. 
It can be seen that when the time slot is 5 ms, the algorithm does not converge after 5\%$\times$5 = 0.25 ms and the PSNR produced by the distributed algorithm is close to that of Heuristic 1, and lower than those of Heuristic 2 and SCA-MAC. When the time slot is sufficiently large, the algorithm can get closer to the optimal and the proposed algorithm produces better video quality as compared to the two heuristic algorithms and SCA-MAC. Beyond 20 ms, the increase in PSNR is small since all the curves gets flat. Therefore the proposed algorithm could be useful even when there is no time for it to fully converge to the optimal. 

\begin{figure}[!t]
\centering
\includegraphics[width=4.5in, height=3.0in]{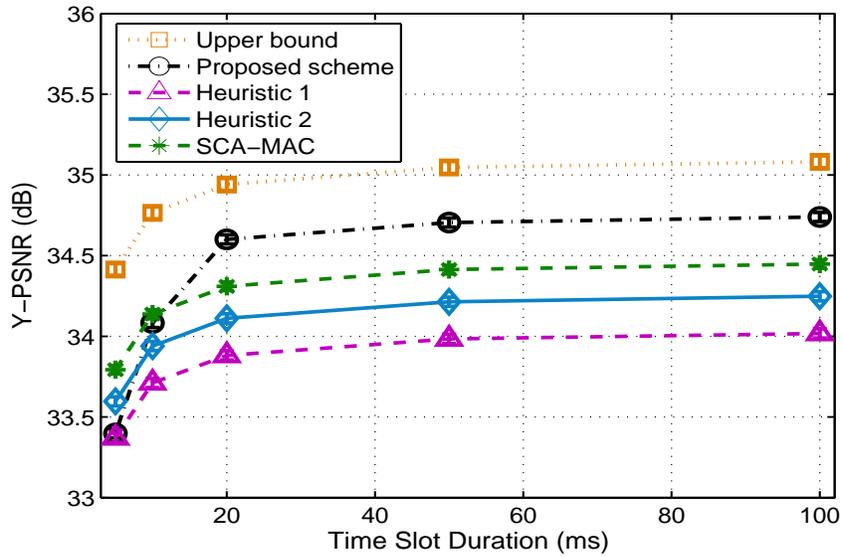}
\caption{Video quality achieved by the algorithms when they are only executed for 5\% of the time slot duration.}
\label{fig:timedura}
\end{figure}

%Finally, we investigate the impact of sensing errors on collision probability through simulation in multiple interfering FBS case. The simulated collision probability with primary users that are caused by CR transmissions are shown in Fig.~\ref{fig:collision}, along with the prescribed collision tolerance. The results demonstrate that the proposed scheme is able to keep the collision probability below given threshold 0.2 and protect primary users from being severely interfered by CR users. 

%\begin{figure}
%\centering
%%\includegraphics[width=3.4in, height=2.5in]{MultiFBS_CollsionProb_Plot.eps}
%\includegraphics[width=3.2in, height=2.1in]{femto_cr_video/MultiFBS_CollsionProb_Plot.eps}
%\caption{Collision tolerance of primary users and the achieved collision probability with primary users due to CR user transmissions.}
%\label{fig:collision}
%\end{figure}

During the simulations, we find the collision rate with primary users are strictly kept below the prescribed collision tolerance $\gamma$. These results are omitted for brevity. 
 
%In a cellular network setting, concurrent transmissions may cause different interference to primary users, and such collision rate is a worst case upper bound on the average interference level. We use this worst case upper bound model since it is widely used in the context of CR networks~\cite{Zhao07survey}, and the resulting mathematical formulation is manageable. Adopting an interference model involves additional global information such as distribution/location of the nodes and BS's, more complicate channel gains, and the time-varying set of active transmitters, etc., some of which are hard to obtain without cooperation from primary users. We plan to investigate this issue in our future work. 

%-------------------------------------------------------------------
% Conclusion
\section{Conclusions}\label{sec:femto_conc}
In this paper, we first investigated data multicast in femtocell networks consisting of an MBS and multiple FBS's. We adopted SC and SIC for multicast data and investigated how to assign transmit powers for the packet levels. The objective was to minimize the total BS power consumption, while guaranteeing successful decoding of the multicast data at each user.  We developed optimal and near-optimal algorithms with low computational complexity, as well as performance bounds. The algorithms were evaluated with simulations and are shown to outperform a heuristic with considerable gains. 

Next, we investigated the problem of streaming multiple MGS videos in a femtocell CR network. We formulated a multistage stochastic programming problem considering various design factors across multiple layers.  We developed a distributed algorithm that can produce optimal solutions in the case of non-interfering FBS's, and a greedy algorithm for near-optimal solutions in the case of interfering FBS's with a proved lower bound. 
The proposed algorithms are evaluated with simulations and are shown to outperform three alternative schemes with considerable gains. 

%-------------------------------------------------------------------
\bibliographystyle{IEEEtran}
\bibliography{cr_video_femto,MyWork}
%-------------------------------------------------------------------

\end{document}